\let\oldabs\abs
\def\abs{\@ifstar{\oldabs}{\oldabs*}}
\let\oldnorm\norm
\def\norm{\@ifstar{\oldnorm}{\oldnorm*}}
\g@addto@macro \normalsize {%
 \setlength\abovedisplayskip{5pt plus 2pt minus 2pt}%
 \setlength\belowdisplayskip{4pt plus 2pt minus 2pt}%
}
\begin{document}

\pagestyle{empty}

\title{
Quantifying  Conformance using  the Skorokhod Metric
\thanks{\small This research was funded in part  by
a Humboldt foundation grant,
FCT grant SFRHBPD902672012,
and by a contract from Toyota Motors.}
\vspace{-5mm}
}
\author{
%\vspace{-3mm}
 Jyotirmoy V. Deshmukh$^1$ \and Rupak Majumdar$^2$ \and Vinayak S. Prabhu$^{2,3}$}
\institute{ 
$^1$ Toyota Technical Center;
 \quad
$^2$ MPI-SWS
\quad
$^3$ University of Porto\\
{\tt jyotirmoy.deshmukh@tema.toyota.com}\quad
{\tt $\{$rupak,vinayak$\}$@mpi-sws.org}
}

\maketitle

%\vspace{-8mm}
\begin{abstract}
The conformance testing problem for dynamical systems asks, given two dynamical models (e.g., as Simulink diagrams),
whether their behaviors are ``close'' to each other.
In the semi-formal approach to conformance testing, the two systems are simulated on a large set of tests, 
and a metric, defined on pairs of real-valued, real-timed trajectories, is used to determine a lower bound on the distance. 
We show how the Skorkhod metric on continuous dynamical systems can be used as the
foundation for conformance testing of complex dynamical models.
% We present a theory of quantitative conformance, between $\reals^n\!$-valued
% dynamical systems,  which incorporates
% the \emph{Skorokhod} metric for measuring trace distances.
The Skorokhod metric allows for both state value mismatches and 
timing distortions, and is thus well suited for  checking conformance between
idealized models of dynamical systems and their implementations.
We demonstrate the  robustness of the   system conformance quantification 
by proving a  \emph{transference theorem}: 
trajectories close under the Skorokhod metric satisfy ``close'' logical properties. 
Specifically, we show the result for the timed linear time logic
 \TLTL  augmented with
a rich class of temporal and spatial constraint predicates.
% which encode corresponding constraints.
We provide a window-based streaming algorithm to compute the Skorokhod metric, and use it
as a basis for a conformance testing tool for Simulink.
We experimentally demonstrate the effectiveness of  our  tool in finding discrepant behaviors
on a set of control system benchmarks, including an industrial challenge problem.
% computing and using the Skorokhod distance 
% between polygonal $\reals^n$ flows for  conformance testing of Control Systems models
% and implementations.

\end{abstract}

%\vspace{-10mm}
\section{Introduction}
\vspace{-2mm}
A fundamental question in model-based design is {\em conformance testing}: whether two models of a system
are equivalent.
For discrete systems, this question is 
well-studied~\cite{Milner80,HennessyM85,HHK95,BisimBook2011},
and there is a rich theory of process equivalences based on similarity and bisimilarity.
For continuous and hybrid systems, however, the state of the art is somewhat unsatisfactory.
While there is a straightforward generalization of process equivalences to the continuous case,
in practice, equivalence notions such as bisimilarity are always too strong and most systems are not
bisimilar.
Since equivalence is a Boolean notion, one gets no additional information about the systems other than they are ``not bisimilar,''
and even if two dynamical systems are bisimilar, they may still differ in many properties that are of control-theoretic
interest.
Thus, classical notions for equivalence and conformance have been of limited use in industrial practice.

In recent years, the notion of bisimulation has therefore been generalized to \emph{metrics} on systems,
which quantify the distance between them.
For example, one approach is that of $\epsilon$-bisimulation,
which requires that the states of the two systems remain ``close'' forever (within an $\epsilon$-ball), rather than coincide exactly.
Under suitable stability assumptions on the dynamics, one can prove results about $\epsilon$-bisimulation \cite{GirardPT10,HaghverdiTP05}.
Unfortunately, proving the pre-requisites for the existence of $\epsilon$-bisimulations for complex dynamical
models, or coming up with suitable and practically tractable bisimulation functions, is extremely difficult in practice.
In addition, establishing  $\epsilon$-bisimulation 
requires full knowledge of the system dynamics making the scheme
inapplicable where one system is an actual physical component 
with unknown mathmatical dynamics.
Bisimulation notions have hence been of limited practical use.

Instead, a more pragmatic semi-formal approach has gained prominence in industrial practice.
In this approach, the two systems  are executed on the same input sequences and a metric on finite trajectories
is used to evaluate the  closeness of  these trajectories.
The key to this methodology is the selection of a {\em good} metric, with the following properties: 
\begin{compactitem}
\item \emph{Transference.}
Closeness in the metric must translate to preserving
interesting classes  of logical and functional specifications between systems, and
\item \emph{Tractability.}
The metric should be efficiently computable.
\end{compactitem} 
In addition, there is the more informal requirement of \emph{applicability}: the metric should classify systems, that
the engineers consider close, as being close, and conversely. 

A number of metrics have been proposed recently.
The simplest is a 
{\em pointwise} metric that computes the maximum pointwise difference between two trajectories,
sometimes generalized to apply a constant time-shift 
to one trajectory~\cite{DonzeM10}.
Unfortunately, for many practical models, two trajectories may be close only under variable time-shifts. 
This is the case, for example, for two dynamical models that may use different numerical integration techniques
(e.g., fixed step versus adaptive step) or when some component in the implementation has some jitter.
Thus, the pointwise metric spuriously report large distances for ``close'' models.
More complicated hybrid distances have been proposed \cite{GeorgiosMemo14}.
The transference properties of these metrics w.r.t.\ 
common temporal logics for dynamical systems are not yet clear.

In this work we present a methodology for quantifying conformance between
real-valued dynamical systems based on the \emph{Skorokhod} 
metric~\cite{Davoren09}.
The Skorokhod metric allows for mismatches in both the trace values \emph{and} in the 
timeline, and quantifies temporal and spatial variation of the system dynamics under
a unifying framework.
The distortion of the timeline is specified by a \emph{retiming} function $\retime$ which
is a continuous bijective strictly increasing function from $\reals_+$ to  $\reals_+$.
Using the retiming function, we obtain the \emph{retimed trace} $x\left(\retime(t)\right)$ from the
original trace $x(t)$.
Intuitively,  in the retimed trace $x\left(\retime(t)\right)$,
we see exactly the same values as before, in
exactly the same order,  but the
time duration between two values might now be different than
the corresponding duration in the original trace.
The amount of distortion for the retiming $\retime$  is given by $\sup_{t\geq 0} \abs{\retime(t)-t}$.
Using retiming functions, the Skorokhod distance between two traces $x$ and $y$ is defined to be
the least value over all possible retimings $\retime$ of:
\vspace{-1mm}
\[\max\left(\sup_{t\in[0,T]} \abs{\retime(t)-t},\,  \sup_{t\in [0,T]}\dist\big(x\left(\retime(t)\right), y(t)\big)
\right),\]
\vspace{-1mm}
where $\dist$ is a pointwise metric on values.
The Skorokhod distance thus
incorporates two components: the first component quantifies
the {\em timing discrepancy} of the timing distortion required to ``match'' two traces,
and the second quantifies the  \emph{value mismatch}  (in the metric space ${\myO}$)
of the values under the timing distortion.
% The Skorokhod metric has been widely used in the semantics and analysis of
% continuous, boolean, stochastic, and hybrid systems~\cite{CaspiB02,Broucke98}.
The Skorokhod metric was introduced as a theoretical basis for defining the semantics of hybrid systems by providing
an appropriate hybrid topology \cite{CaspiB02,Broucke98}.
We now demonstrate its usefulness in the context of conformance testing.
% We take a computational approach, and use the metric as the basis for conformance testing. 

\smallskip
\noindent
\textbf{Transference.}
We show that the Skorokhod metric gives a robust quantification of system conformance
by relating the metric to \TLTL (timed \LTL) enriched with 
(i)~predicates of the form $f(x_1,\dots, x_n) \geq 0$, as in Signal Temporal Logic,
for specifying constraints on trace values; and
(ii) \emph{freeze quantifiers}, as in \TPTL~\cite{AlurH94},  for specifying temporal constraints
(freeze quantifiers can express more complex timing constraints than bounded
timing constraints, \emph{e.g}  of \MTL).
This logic subsumes the \MITL-based logic \STL~\cite{DonzeM10}.
We prove a \emph{transference theorem}: flows (and propositional traces) which are close under
the Skorokhod metric satisfy ``close''  \TLTL formulae for a rich class of temporal
and spatial  predicates; where the untimed structure of the formulae 
remains unchanged, only the predicates are enlarged.

% \noindent\textbf{Goal of Our Work.} 
% While we prove the logic robustness theorem with respect to the Skorokhod metric, 
% ur focus is this work is not on the 
% computing the satisfaction of formulae over traces (or systems). 
% design engineers develop a ``feel'' for what is correct, and what is not, over
% ears of domain expertise; intricate correctness properties are often not written
% own.
% ur aim in this work is to show that:
% 1)~ if an system $S_{\impl}$ is close to a system 
% S_{\spec}$ under the Skorokhod metric, then a rich class of properties 
% whether written down or not) of $S_{\spec}$ transfer over to  $S_{\impl}$ under
%  quantifiable relaxation; and
% 2)~estimating the closeness of two systems for the  Skorokhod metric is
% tractable.

%We present two examples to illustrate the kind of properties for which
%we can provide transference guarantees.

%\smallskip
\noindent\textbf{Tractability.}
We improve on recent polynomial-time algorithms for the Skorokhod 
metric~\cite{MajumdarPHSCC15}
by taking advantage of the fact that, in practice, only retimings
that map the times in one trace to  ``close'' times in the other are of interest.
This enables us to obtain a streaming 
sliding-window based monitoring procedure which takes only
$O(W)$ time per sample, where $W$ is the window size 
(assuming the dimension $n$ of the system to be a constant).

\smallskip\noindent\textbf{Usability.}
Using the Skorokhod distance checking procedure as a subroutine,
we have implemented a Simulink toolbox for conformance testing.
Our tool integrates with Simulink's model-based design flow for 
control systems, and provides a stochastic search-based approach to
find inputs which maximize the Skorokhod distance between 
systems under these inputs.

We present three case studies from the control domain, including industrial challenge problems;
our empirical evaluation 
shows that our tool computes  sharp estimates of the conformance distance reasonably
fast on each of them.
Our input models were complex enough that more theoretically appealing techniques such as
$\epsilon$-bisimulation function generation could not be applied.
In particular, we demonstrate how two models that only differ in the underlying ODE solver
can nevertheless deviate enough to invalidate system requirements on settling time.

We conclude that the Skorokhod metric can be an effective foundation for semi-formal conformance
testing for complex dynamical models.
Proofs of the theorems are given in the accompanying technical report [REF].
% 
% We  illustrate the theory with 
%  preliminary results of  our  empirical tool (based on streaming monitoring of
% the Skorokhod distance between polygonal $\reals^n$ flows)
% which computes a quantitative estimate of the conformance between two systems.

\smallskip\noindent\textbf{Related Work.}
The work of~\cite{GeorgiosMemo14,GeorgiosHFDKU14} is closely related
to ours.
In it, robustness properties of hybrid state \emph{sequences} are derived
with respect to a trace metric which also 
 quantifies temporal and spatial variations. 
Our work differs in the following ways.
First, we  guarantee  robustness properties over \emph{flows} rather than
only over (discrete) sequences.
Second, the Skorokhod metric is a stronger form of the 
$(T,J,(\tau, \epsilon))$-closeness degree\footnote{Instead of
having two separate parameters $\tau$ and $\epsilon$ for
time and state variation, we pre-scale time and the $n$ state components
with  $n+1$ constants, and have a single value quantifying closeness of the
scaled traces.}\textsuperscript{,}\footnote{Informally, two signals
$x,y$ are $(T,J,(\tau, \epsilon))$-close if for each point
$x(t)$, there is a point $y(t')$ with $|t-t'| < \tau$ such that
$\dist(x(t), y(t')) <\epsilon$; and similarly for $y(t)$.}(for systems
which do not have hybrid time); and
allows us to give stronger robustness transference guarantees.
The Skorokhod metric requires order preservation of the timeline, which
the $(T,J,(\tau, \epsilon))$-closeness function does not.
Preservation of the timeline order allows us to
(i)~keep the untimed structure of the formulae the same (unlike
in the transference theorem of~\cite{GeorgiosMemo14});
(ii)~show transference of a rich class of global timing constraints
using freeze quantifiers (rather than only for the standard bounded time
quantifiers of \MTL/\MITL).
However, for implementations where the timeline order is not preserved,
we have to settle for the less stronger guarantees provided 
by~\cite{GeorgiosMemo14}.
The work of~\cite{DonzeM10}, in terms of robustness,
 deals mainly with spatial robustness of \STL;
the only temporal disturbances considered are constant time-shifts for the entire signal
where the entire signal is moved to the past, or to the future by the same amount.
The Skorokhod metric incorporates time-shifts which are variable along the timeline.

\vspace{-4mm}
\section{Preliminaries}
\vspace{-3mm}

\noindent\textbf{Traces.}
A (finite) \emph{trace} or a  \emph{signal} 
$\pi: [T_i,T_e] \mapsto {\myO} $ is a  mapping from a
finite closed interval $[T_i,T_e]$ of $\reals_+$,
with $0 \leq T_i < T_e$,
 to some topological space  ${\myO}$. %%%%% where the space  $O$ has an
 If $\myO$ is a metric space, we refer to the  associated metric as $\dist_{\myO}$.
%In this work we restrict our attention to traces which are continuous
%with respect to time.
The time-domain of $\pi$, denoted $\tdom(\pi)$  is the time domain $[T_i,T_e]$ 
over which it is defined.
The  time-duration of $\pi$, denoted as $\tlen(\pi)$, is $\sup\left( \tdom(\pi) \right)$.
The $t$-suffix of  $\pi$ for $t\in \tdom(\pi)$, denoted by $\pi^t$, is the 
trace $\pi$ restricted to the interval $( \tdom(\pi) \cap [t, \tlen(\pi)]$.
We denote by $\pi_{\downarrow T'_e}$ the prefix trace obtained from
$\pi$ by restricting the domain to $[T_i,T'_e]\subseteq \tdom(\pi)$.

\noindent\textbf{Systems.}
A (continuous-time) \emph{system}  
$\system: \left( \reals_+^{\scalebox{0.6}{[\ ]}}\mapsto {\myO}_{\myinput}\right)\, \mapsto\,
 \left( \reals_+^{\scalebox{0.6}{[\ ]}}\mapsto {\myO}_{\myoutput}\right)$,
where $ \reals_+^{\scalebox{0.6}{[\ ]}}$ is the set of finite closed intervals of $\reals_+$,
 transforms input traces
$\pi_{\myinput}: [T_i,T_e] \mapsto {\myO}_{\myinput} $ into  output traces
$\pi_{\myoutput}: [T_i,T_e] \mapsto {\myO}_{\myoutput} $ (over the same time domain).
We require  that if $\system(\pi_{\myinput}) \mapsto \pi_{\myoutput}$,
then for every $\min \tdom(\pi) \leq T_e' < \max \tdom(\pi)$, the system $\system$ maps
${\pi_{\myinput}}_{\downarrow T'_e} $ to ${\pi_{\myoutput}}_{\downarrow T'_e} $.
Thus, we only consider causal systems.
Common examples of such systems are (causal) dynamical, and hybrid dynamical 
systems~\cite{Branicky1995PhD,tabuadabook}.

%\smallskip
\noindent\textbf{Conformance.}
A system $\system'$ conforms to the system $\system$ over
an input trace $\pi_{\myinput}$ if
$\system'(\pi_{\myinput}) = \system(\pi_{\myinput}) $,
\emph{i.e.} if the behavior of $\system'$ on  the input trace $\pi_{\myinput}$ 
is the same as that of $\system$.
The system  $\system'$ conforms to the system $\system$ over
the input trace set $\Pi_{\myinput}$ if conformance holds for each input trace
in $\Pi_{\myinput}$.
Given a metric $\dist$ over input traces, and an input trace set $\Pi_{\myinput}$,
the \emph{quantitative conformance} between $\system'$ and
 $\system$ over $\Pi_{\myinput}$ is defined as the quantity
$
\sup_{\pi_{\myinput}\in \Pi_{\myinput}} 
\dist\left(\system'\left(\pi_{\myinput}\right),  
\system\left(\pi_{\myinput}\right) \right).
$
If $\Pi_{\myinput}$ is the set of all input traces, this quantity is the 
distance between the two systems.
% the \emph{refinement distance}.

\smallskip\noindent\textbf{Retimings.}
A \emph{retiming} $\retime: I \mapsto  I' $, for closed
intervals $I, I'$ of $\reals_+$   is an order-preserving (\emph{i.e.} monotone)  
continuous
bijective function from
$I$ to $I'$; %that is $\retime$ is a bijective function
thus if $t<t'$ then $\retime(t) < \retime(t')$.
Let the class of retiming functions from $I$ to $ I' $
be denoted as $\retimeclass_{I \mapsto  I' }$, and let
$\iden$ be the identity retiming.
Intuitively, retiming can be thought of as follows: imagine a stretchable and compressible
timeline; a retiming  of the original timeline gives a new timeline
 where some parts have been
stretched, and some compressed, without the timeline having been broken.
Given a trace  $\pi: I_{\pi} \rightarrow {\myO} $, and a retiming
$\retime: I \mapsto   I_{\pi} $; the function
$\pi\circ \retime$ is another trace from $I$ to ${\myO}$.

\vspace{-1mm}
\begin{definition}[Skorokhod Metric]
Given a retiming $\retime:  I \mapsto I' $, let $||\retime-\iden||_{\sup}$ be defined as
$
||\retime-\iden||_{\sup} =\sup_{t\in I }|\retime(t)-t|$.
Given two traces $\pi: I_{\pi}\mapsto {\myO} $ and $\pi': I_{\pi'} \mapsto {\myO} $,
where ${\myO}$ is a metric space with the associated
metric $ \dist_{{\myO}}$,
 and a retiming $\retime:  I_{\pi} \mapsto   I_{\pi'}$, let
$\norm{\pi\,-\, \pi'\circ \retime}_{\sup}$  be defined as:
\vspace{-2mm}
\[
\norm{\pi\,-\, \pi'\circ \retime}_{\sup}
=
\sup\nolimits_{t\in I_{\pi}} \dist_{{\myO}}\big(\,  \pi(t)\ ,\ \pi'\left(\retime(t)\right)\,  \big).\]
The \emph{Skorokhod distance}\footnote{
 The two components
of the Skorokhod distance (the retiming, and the value difference components) can
be weighed with different weights -- this simply corresponds to a change of scale.}
 between the traces  $\pi()$ and $\pi'()$  is defined to be:
\begin{equation}
\label{equation:Skoro}
\dist_{\skoro}(\pi,\pi') = \inf_{r\in  \retimeclass_{ I_{\pi} \mapsto   I_{\pi'}}}
\max(\norm{\retime-\iden}_{\sup} \, ,\,  \norm{\pi\,-\, \pi'\circ \retime}_{\sup}).\qed
\end{equation}
\end{definition}

\vspace{-2mm}
Intuitively, the  Skorokhod distance
incorporates two components: the first component quantifies
the {\em timing discrepancy} of the timing distortion required to ``match'' two traces,
and the second quantifies the  \emph{value mismatch}  (in the metric space ${\myO}$)
of the values under the timing distortion.
In the retimed trace $ \pi\circ \retime$, we see exactly the same values as in $\pi$, in
exactly the same order, but the times at which the value are seen can be different.

\smallskip\noindent\textbf{Polygonal Traces.}
A 
polygonal trace $\pi:  I_{\pi} \mapsto \myO$ where $\myO$ is a vector space
with the scalar field  $\reals$  is a continuous trace such that
there exists a finite sequence $\min I_{\pi}= t_0 < t_1 < \dots < t_m = \max I_{\pi}$ 
of time-points such that
the trace segment between $t_k$ and $t_{k+1}$ is affine for all $0\leq k < m$,
\emph{i.e.}, for $t_k \leq t \leq t_{k+1}$ we have
$\pi(t) = \pi(t_k) + \frac{t- t_k}{t_{k+1}-t_k}\vdot \left(\pi( t_{k+1}) - \pi(t_k)\right)$.
Polygonal traces are obtained when discrete-time traces are completed by
linear interpolation.
We remark that after retiming, the retimed trace $\pi \circ \retime$ \emph{need not}
be piecewise linear (see \emph{e.g.}~\cite{MajumdarP14}).

\vspace{-2mm}
\begin{theorem}[Computing the Distance between  Polygonal 
Traces~\cite{MajumdarPHSCC15}]
\label{theorem:SkoroFinal}
Let $\pi : I_{\pi} \mapsto \reals^n$  and $\pi': I_{\pi'}\mapsto \reals^n$ be two 
polygonal  traces  with $m_{\pi}$ and $m_{\pi'}$ affine segments respectively.
Let the  Skorokhod distance between them (for the $L_2$ norm on $\reals^n$) be
denoted as $\dist_{\skoro} (\pi, \pi') $.
\begin{compactenum}
\item Given  $\delta \geq 0$, it can be checked whether
 $\dist_{\skoro} (\pi, \pi') \leq \delta$ in time $O\left(m_{\pi}\vdot m_{\pi'}\vdot n\right)$.
\item 
  Suppose we restrict retimings to be such that the $i$-th affine segment of $\pi$
can only be matched to $\pi'$ affine segments $i-W$ through $i+W$ for all $i$,
where $W\geq 1$.
Under this retiming restriction, we can determine, with a streaming algorithm,
whether
 $\dist_{\skoro} (\pi, \pi') \leq \delta$ in time $O\left(\left(m_{\pi}+ m_{\pi'}\right)\vdot n\vdot W\right)$.\qed
%   $\dist_{\skoro} (\pi, \pi') $ can be computed in time
% $
% O\Big(
% \left(m_{\pi}\vdot m_{\pi'}^2 + m_{\pi'}\vdot m_{\pi}^2\right)\vdot
% \big( n + \log(m_{\pi}\vdot m_{\pi'}) \big)\
% \Big)
% $
\end{compactenum}
\end{theorem}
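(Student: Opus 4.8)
The plan is to adapt the free-space-diagram method of Alt and Godau for the Fréchet distance to the timing-constrained setting of the Skorokhod metric. First I would set up the parameter space $\mathcal{D} = I_\pi \times I_{\pi'}$, in which a point $(s,u)$ records the decision to align time $s$ of $\pi$ with time $u$ of $\pi'$; a retiming $\retime$ then corresponds precisely to a monotone curve in $\mathcal{D}$ from the lower-left corner $(\min I_\pi, \min I_{\pi'})$ to the upper-right corner $(\max I_\pi, \max I_{\pi'})$. The decision $\dist_{\skoro}(\pi,\pi') \leq \delta$ becomes the existence of such a monotone curve staying inside the \emph{free space}
\[
\mathcal{F}_\delta = \{(s,u) \in \mathcal{D} : \dist_{\myO}(\pi(s), \pi'(u)) \leq \delta \ \text{ and } \ |u - s| \leq \delta\},
\]
where the first conjunct enforces the value-mismatch bound $\|\pi - \pi'\circ\retime\|_{\sup} \leq \delta$ and the second enforces the timing-distortion bound $\|\retime - \iden\|_{\sup} \leq \delta$.

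The central structural fact I would establish is that $\mathcal{F}_\delta$ is \emph{convex within each cell} of the $m_\pi \times m_{\pi'}$ grid induced by the breakpoints of the two polygonal traces. On the cell for the $i$-th segment of $\pi$ and the $j$-th segment of $\pi'$, both $\pi$ and $\pi'$ are affine in their parameters, so $\pi(s) - \pi'(u)$ is affine in $(s,u)$ and the value-constraint set $\{\dist_{\myO}(\pi(s),\pi'(u)) \leq \delta\}$ is the preimage of a Euclidean $\delta$-ball under an affine map, hence convex for the $L_2$ norm; the timing band $\{|u-s| \leq \delta\}$ is a strip, also convex; and their intersection is convex. Consequently the free space on each cell boundary is a single interval, and I can run the standard Alt--Godau dynamic program that propagates, along the horizontal and vertical cell edges, the set of boundary points reachable by a monotone path from the start corner. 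Each cell's free interval and reachability update costs $O(n)$, since computing $\dist_{\myO}$ and the associated affine coefficients in $\reals^n$ is linear in the dimension; with $m_\pi \cdot m_{\pi'}$ cells this yields the $O(m_\pi \cdot m_{\pi'} \cdot n)$ bound of part~(1).

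For part~(2), the windowing restriction confines any admissible retiming to cells $(i,j)$ with $|i-j| \leq W$, so the relevant portion of the diagram is a diagonal band of only $O((m_\pi + m_{\pi'}) \cdot W)$ cells. I would process the traces incrementally: as each new affine segment of $\pi$ and of $\pi'$ arrives, it contributes at most $O(W)$ new band cells, whose free intervals and reachability updates are computed from the stored reachability data on the $O(W)$ active frontier edges and then overwrite them. Each cell still costs $O(n)$, so the sweep runs in $O((m_\pi + m_{\pi'}) \cdot n \cdot W)$ total time while retaining only $O(nW)$ working state, giving a streaming monitor that consumes each incoming sample in $O(nW)$ time.

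The main obstacle I anticipate is twofold. First, reconciling the strictly increasing bijective retimings demanded by the Skorokhod definition with the weakly monotone reachability paths that the free-space program naturally tracks: I would argue that the infimum over strict retimings coincides with the reachability predicate, by approximating any monotone free-space path (including its horizontal and vertical portions) arbitrarily well by strictly increasing bijections without pushing either the value or the timing component beyond $\delta$. Second, care is needed because, as remarked, $\pi'\circ\retime$ need not be polygonal; but this is harmless, since the free-space analysis uses only the affine structure of the \emph{original} traces in parameter space and never the shape of the composed signal. The combined value-plus-timing convexity lemma, and the bookkeeping that keeps the band sweep at $O(W)$ active edges, are the two places where the argument must be made fully precise.
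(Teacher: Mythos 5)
This theorem is not proved in the paper itself---it is imported from the cited work~\cite{MajumdarPHSCC15}, and your free-space-diagram argument is essentially the approach taken there: the per-cell free space is the intersection of the (convex) preimage of an $L_2$ ball under an affine map with the diagonal timing strip, reachability is propagated Alt--Godau style at $O(n)$ per cell, and the window restriction confines the computation to a diagonal band of $O((m_{\pi}+m_{\pi'})\cdot W)$ cells for the streaming variant. Your two flagged subtleties (weakly monotone paths versus strictly increasing bijective retimings, and the irrelevance of $\pi'\circ\retime$ not being polygonal) are exactly the points the cited proof also has to and does handle, so the proposal is correct and matches the source.
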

\vspace{-2mm}
Let us denote by $\dist_{\skoro}^W (\pi, \pi') $ the Skorokhod difference
between $\pi, \pi'$ under the retiming restriction of the second part of
Theorem~\ref{theorem:SkoroFinal}, \emph{i.e.}, the value obtained by
 restricting the retimings in Equation~\ref{equation:Skoro}\footnote{$\dist_{\skoro}^W$
is not a metric over traces (the triangle inequality fails).}.
The  value  $ \dist_{\skoro}^W (\pi, \pi') $
is an upper bound on $ \dist_{\skoro} (\pi, \pi') $.
In addition, for $W' < W$, we have $ \dist_{\skoro}^W (\pi, \pi')  \leq 
 \dist_{\skoro}^{W'} (\pi, \pi') $.

\vspace{-4mm}
\section{Skorokhod Distance based Conformance Testing}
\label{sec:conformance}
\vspace{-3mm}

%We now discuss the empirical evaluation of a conformance testing procedure
%based on the online Skorokhod distance computation algorithm. The setup is
%as follows: we assume that we have two systems for which we wish to test
%conformance.  

In conformance testing, we test for the variance in behavior of two given
systems $\system_1$ and $\system_2$ under the same input\footnote{It is
also possible to extend our approach to allow inputs that are within some
bounded Skorokhod distance.}.  Given the same input, the two systems
produce potentially differing output traces; the goal is to quantify this
difference, and to determine an  input signal that causes the corresponding
output signals to exceed a user provided bound on the maximum tolerable
output trace distance.

% . We wish to quantify the difference between the outputs
% using the Skorokhod metric. Further, we wish to determine the single input
% signal that causes the corresponding output signals to exceed a
% user-provided bound on the Skorokhod distance. 

\begin{comment}
In some cases, given systems $M_1$ and $M_2$ such that $M_1$ satisfies the
temporal requirement $\varphi_1$, then a bound $\delta$ could be provided
by the user based on the weakest $\delta$-relaxation of $\varphi_1$ that
the user tolerates for the behaviors of $M_2$ to satisfy. We begin by
discussing a simulation-guided (sound but incomplete) testing algorithm. In
rest of the section, we discuss several case studies, providing rationale
for choosing the appropriate $\delta$ and presenting results on the
computation time.
\end{comment}

\newcommand{\assign}{\leftarrow}
\newcommand{\cost}{\mathit{cost}}
\newcommand{\maxCost}{\mathit{maxCost}}
\newcommand{\maxIter}{\mathit{maxIterations}}

{\small
%\vspace{-2mm}
\begin{algorithm}
\DontPrintSemicolon
\caption{Algorithm to test if  $\displaystyle\max_{y_1 , y_2} \dist_\skoro(y_1,y_2) \, <\delta$}
\label{algo:simuskoro}
\KwIn{Systems $\system_1$,  $\system_2$, Bound $\delta$, Input Parameterization
$(P,F,B)$,  Time Horizon $T$}
\KwOut{$u(t)$ s.t. $y_1 = \system_1(u)$, $y_2 = \system_2(u)$, and $\dist_{\skoro}(y_1,y_2) > \delta$} 
$u \assign \mathtt{random}(P,F,B)$ \;
$\maxCost \assign -\infty$, $m \assign 0$ \;
\While{($maxCost < \delta$) \texttt{or} ($m < \maxIter$)}{
   $y_1 \assign \mathtt{simulate}(M_1, u, T)$ \;
   $y_2 \assign \mathtt{simulate}(M_2, u, T)$ \;
   $\cost \assign \dist_{\skoro}(y_1,y_2)$ \;
   \lIf{$\cost > \maxCost$}{
        $\maxCost  \assign \cost$ 
   }
   $u \assign \mathtt{pickNewInputs}(\cost)$ \;
   $m \assign m+1$\;
}
\end{algorithm}
}

Algorithm~\ref{algo:simuskoro} is a standard optimization-guided testing
algorithm in which we have used the Skorokhod distance between two output
traces as the cost function.  In such algorithms, it is common to define a
finite parameterization of the input space, represented by the tuple
$(P,F,B)$, where $P = \{p_1,\ldots,p_k\}$ represents a set of parameters,
$F = \{f_1,\ldots,f_k\}$ represents a finite set of basis functions from
$[0,T]$ to $\reals^n$, where $T$ is some finite time-horizon, and for each
$p_i \in P$, there is a $b_i \in B$ that is a closed interval in $\reals$
over which $p_i$ is assumed to take values. An input signal $u$ is defined
such that, for all $t$, $u(t) = \sum_i p_i\cdot f_i(t)$. A valid input
signal has the property that for all $i$, $p_i \in b_i$.

In each step, the algorithm picks an input signal $u$ and computes the
Skorokhod distance between the corresponding outputs $y_1 = \system_1(u)$
and $y_2 = \system_2(u)$.  Based on heuristics that rely on the current
cost, and a possibly bounded history of costs, the procedure then picks a
new value for $u$.  For instance, in a gradient-ascent based procedure, the
new value of $u$ is chosen by estimating the local gradient in each
direction in the input-parameter space, and then picking the direction that
has the largest (positive) gradient. In our implementation, we use the
Nelder-Mead (or nonlinear simplex) algorithm.

The algorithm terminates when a violation is found (i.e., a pair of inputs
that exceed the user-provided Skorokhod distance bound), or when the number
of iterations is exhausted.  The  Skorokhod distance bound $\delta$ is
chosen based on engineering requirements, \emph{e.g.}, based on the maximum
allowed weakening of the temporal logical properties that have been
verified/tested on one system.

\smallskip\noindent\textbf{Sampling and Polygonal Approximations.}
In practice, the output behaviors of the systems  are observed with
a sampling process, thus in implementations of Algorithm~\ref{algo:simuskoro},
entities $y_1$ and $y_2$ on lines $4,5$ are time-sampled output trace
\emph{sequences}, from which
 the Skorokhod distance algorithm of Theorem~\ref{theorem:SkoroFinal}
constructs (continuous time)  signals using linear
interpolation.
Given a timed trace sequence $\tseq$, let $\symb{\tseq}_{\LI}$ denote the
continuous time trace obtained from $\tseq$ by linear interpolation.
Let $\tseq_{\pi}, \tseq_{\pi'}$ be two corresponding samplings
of the traces  $\pi, \pi'$.
Since the Skorokhod distance is a metric, we have that
\vspace{-1mm}
\[\dist_{\skoro}(\pi, \pi') \leq 
\dist_{\skoro}\left(\symb{\tseq_{\pi}}_{\LI}, \symb{\tseq_{\pi'}}_{\LI}\right)\, +\, 
\dist_{\skoro}\left(\symb{\tseq_{\pi}}_{\LI}, \pi\right) + 
\dist_{\skoro}\left(\symb{\tseq_{\pi'}}_{\LI}, \pi'\right). \vspace{-1mm}\]
If $\Delta_{\sampleerr}$ is a bound on the distance between a trace, and an
interpolated completion of its sampling, we have that
$
\dist_{\skoro}(\pi, \pi') \leq \dist_{\skoro}\left(\symb{\tseq_{\pi}}_{\LI}, \symb{\tseq_{\pi'}}_{\LI}\right)\, +\, 
2\vdot \Delta_{\sampleerr}$.
Thus, in a sampling framework, a value of $2\vdot \Delta_{\sampleerr}$
needs to be added to the Skorokhod distance between the polygonal approximations.

Section~\ref{section:Logic} presents a theory of (quantifiable) 
transference of logical  properties.
Section~\ref{section:Experiment} presents results on our implementation
of Algorithm~\ref{algo:simuskoro}.
We also discuss several case studies, 
providing rationale for choosing the appropriate $\delta$ value, 
and present results on the
computation time and the conformance distance found.

\vspace{-2mm}
\section{Transference of  Logical Properties}
\label{section:Logic}

\vspace{-2mm}
In this section, we demonstrate transference of logical properties.
If two traces are at a distance of $\delta$, and one trace satisfies a logical
specification $\phi$, we derive the ``relaxation'' needed (if any) in $\phi$ so that
the other trace also satisfies this relaxed logical specification.
The logic we use  is a version of  the  timed linear time logic 
\TLTL~\cite{AlurH94} (a timed version of the logic \LTL).
We show  that the Skorokhod distance provides robust transference of
specifications in this logic: if the Skorokhod distance between
two traces is small, they satisfy close  \TLTL formulae.
We first present the results in a propositional framework, and then extend
to $\reals^n$-valued spaces.

\vspace{-3mm}
\subsection{The Logic \TLTL}
\vspace{-1mm}

Let $\prop$ be a set of propositions.
A \emph{propositional trace} $\pi$ over $\prop$ is a 
trace where the topological space is $2^{\prop}$, with the associated metric:
$\dist_{\prop}(\sigma, \sigma')  = \infty$ if $\sigma \neq \sigma'$, and $0$ otherwise
for $\sigma,\sigma'\in 2^{\prop}$.
% function from
% a finite closed
%  interval $I$ of $\reals_+$ to $2^{\prop}$. %where $\prop$ is a set of propositions.
% In this paper, we restrict our attention to finite traces, and require that the interval
% $I$ be bounded.
We restrict our attention to propositional 
traces with finite variability: we require
that there exists a finite partition of $\tdom(\pi)$ into disjoint subintervals
$I_0, I_1, \dots, I_m$ such that $\pi$ is constant on each subinterval.
%$0\leq I_j\leq m$.
%
%= (\sigma_0, t_0), \dots, (\sigma_m, t_m)$ over $\prop$
 %is a sequence of%
%pairs $(\sigma_i, t_i)$ such that $\sigma_i\subseteq \prop$ and $t_i\in \reals_+$ with
%$t_{i+1} \geq t_i$ for all $0\leq i<m$.
% suffix
% $(\sigma_i, t_i), \dots, (\sigma_m, t_m)$.
% We let $\pi[i]$ denote the pair $(\sigma_i. t_i)$.
%  The 
% corresponding projection sequences  $\sigma_0, \sigma_1,\dots, \sigma_m$ and $t_0, t_1,\dots, t_m$
% are denoted  $\pi_{\Sigma}$ and  $\pi_{\mytime}$ 
%  respectively.
The set of all  timed propositional traces over $\prop$  is denoted by $\Pi(\prop)$.

\vspace{-1mm}
\begin{definition}[\mTLTL{\small$(\TFun)$} Syntax]
Given a set of propositions $\prop$, a set of (time) variables $V_{\mytime}$, and a set $\TFun$ of functions
from $\reals_+^l$ to $\reals$,
the formulae of  \mTLTL($\TFun$)
are defined by the following grammar.
\[
\phi := p \mid \true \mid f_{\mytime}(\overline{x}) \sim 0 \mid \neg\phi \mid \phi_1 \wedge \phi_2 \mid  \phi_1 \vee \phi_2 \mid \phi_1 \until \phi_2 \mid x.\phi
\quad \text{ where}
\]
\begin{compactitem}
\item 
$p\in \prop$ and  $x\in V_{\mytime}$, and $\overline{x} = (x_1, \dots, x_l)$ with $x_i\in V_{\mytime}$ for all $1\leq i \leq l$;
\item  $f_{\mytime} \in \TFun$ is a   real-valued function,
%$\sum_i a_i \vdot x_i  \sim 0$ where $x_i\in V$ and $a_i$ is a rational number, 
and $\sim$ is one of $ \set{\leq, <, \geq, >}$.\qed
\end{compactitem}

\end{definition}

We say that the variable $x$ is \emph{bound} in $\phi$ if $\phi$ is $ x.\Psi$, otherwise it is \emph{free}.
The quantifier ``$x.$'' is known as the \emph{freeze quantifier}, and binds the variable $x$ to the current
time.
A formula is \emph{closed} if it has no free variables.

\begin{definition}[\mTLTL{\small$(\TFun)$} Semantics]
\label{definition:PropositionalSemantics}
%Let $\pi = (\sigma_0, t_0), \dots, (\sigma_m, t_m))$ 
Let $\pi: I \mapsto 2^{\prop}$ be a timed propositional trace,
$t_0 = \min(I)$, 
%be a timed sequence over $\prop$, 
and let $\env: V \mapsto I$ be the time environment
mapping the variables in  $V$ to time values in $I$.
The satisfaction of the timed sequence $\pi$ with respect to the  \mTLTL($\TFun)$ formula $\phi$ in the time environment $\env$ is
written as $\pi \models_{\env} \phi$, and  is
defined inductively as follows (denoting $t_0 = \min \tdom(\pi)$).
\begin{align*}
\pi & \models_{\env} p \text{ for } p\in \prop  \text{ iff } p\in \pi(t_0);
\qquad \pi  \models_{\env} \true;
\qquad \pi  \models_{\env} \neg\Psi \text{ iff } \pi  \not\models_{\env} \Psi;\\
\pi & \models_{\env} \phi_1 \wedge \phi_2 \text{ iff } \pi  \models_{\env} \phi_1 \text{ and } \pi \models_{\env} \phi_2;
\quad \pi  \models_{\env} \phi_1 \vee \phi_2 \text{ iff } \pi  \models_{\env} \phi_1 \text{ or } \pi \models_{\env} \phi_2;\\
 \pi & \models_{\env} f_{\mytime}(x_1, \dots, x_l) \sim 0 \text{ iff }  f_{\mytime}(\env(x_1), \dots, \env(x_l)) \sim 0
\text{ for }  \sim \,\in \set{\leq, <, \geq, >};\\
\pi & \models_{\env}\! x.\psi \text{ iff } \pi\!  \models_{\env[x\!:=\!t_0]}\! \psi \text{ where } \env[x\!:=\!t_0] \text{ agrees}\, \text{with }
\env\! \text{ for}\, \text{all } z\!\neq\! x, \!\text{ and}\,\text{maps}\, x\, \text{to } t_0; \\
\pi & \models_{\env}  \phi_1 \until \phi_2  \text{ iff  }  \pi^t  \models_{\env} 
\phi_2 \text{ for some }  t\in I
\text{ and }  \pi^{t'}  \models_{\env} \phi_1\vee \phi_2  \text{ for all }  t_0\leq t' < t.
\end{align*}
A timed trace $\pi$ is said to satisfy the closed formula $\phi$ (written
as  $ \pi \models \phi$) if there is some environment $\env$ such that
$ \pi \models_{\env} \phi$.
\qed
\end{definition}
\vspace{-1mm}
The definition of additional temporal operators in terms of these base operators  is standard:
the ``eventually'' operator $\Diamond \phi$ stands for $\true \until \phi$; and the
``always'' operator $\Box \phi$ stands for  $\neg \Diamond \neg \phi$.
% It is shown in~\cite{AlurH94} that  \mTLTL{\small$(\TFun)$, even with
% $\TFun$ consisting of functions of the form $f(x,y)= x-y+c$, is strictly more expressive
%  than \MTL as 
 \mTLTL{\small$(\TFun)$} provides a richer framework than \MTL~\cite{Koymans90} 
for expressing timing
constraints as:
(i)~freeze quantifiers
allow specification of constraints between distant contexts, which the bounded temporal operators in \MTL
cannot do; and
(ii)~the predicates $f_{\mytime}() \sim 0$ for  $f_{\mytime}\in \TFun$ allow the
specification of  complex timing requirements not expressible in \MTL.
\vspace{-3mm}
\begin{example}[Freeze quantifiers; \mTLTL{\small$(\TFun)$} subsumes \MTL]
Let $\TFun$ be the set of two variable functions  of the form $f(x,y) = x-y+c$ where $c$ is a rational  constant.
Then \mTLTL{\small$(\TFun)$} subsumes \MTL.
The \MTL formula $Q \until_{[a,b]} R$ can be written as
\vspace{-1mm}
\[
x. \Big(Q \until y. \big( \,(y\leq x+b) \wedge (y\geq x+a)  \wedge  R\big)\Big).\]
\vspace{-1mm}
We explain the formula as follows.
We assign the ``current'' time $t_x$  to the variable $x$, and some future time $t_y$ to the variable $y$.
The values $t_x$ and  $t_y$ are such that
at time $t_y$, we have  $R$ to be true, and moreover, at all times between $t_x$  and $t_y$, 
we  have $Q\vee R$ to be true.
Furthermore, $t_y$ must be such that $t_y\in [t_x+a, t_x+b]$, which is specified by the term
$(y\leq x+b) \wedge (y\geq x+a) $.
\qed
\end{example}
\vspace{-4mm}
\begin{example}[Temporal Constraints]
\label{example:TemporalConstraints}
Suppose we want to express that  whenever the event $Q$ occurs, 
it must be followed by a response $R$, and then
by $S$.
In addition, we have the following timing requirement:
if $\varepsilon_{QR}, \varepsilon_{RS}, \varepsilon_{QS}$ are
 the time delays  between $Q$ and $R$, between $R,S$, and between
$Q$ and $S$ respectively, then:
we must have 
$\varepsilon^2_{QR}+ \varepsilon^2_{RS}+\varepsilon^2_{QS} \leq d$ for
a given positive constant $d$.
This can be written using freeze quantifiers as the \TLTL formula $\phi$:
\[
x. \left( Q \rightarrow \Diamond 
\big(y. \left(R \wedge 
 \Diamond\left[z. \left(S \wedge  
\left(  (y-x)^2 + (z-y)^2 + (z-x)^2 \leq d \right) \right)
\right] \right)
\big)\right).\qed\]
\end{example}

\vspace{-3mm}
\subsection{Transference of \TLTL Properties for Propositional Traces}
\vspace{-1mm}

We show in this section that if a timed propositional trace
 $\pi$ satisfies a \mTLTL{\small$(\TFun)$} formula $\phi$, then any timed trace $\pi'$ that is at most
$\delta$ distance away from $\pi$ satisfies  a slightly relaxed version of the formula $\phi$, 
the degree of relaxation being governed by $\delta$; and
the variance of the  functions in $\TFun$ over the time interval
containing the time domains of $\pi$ and $\pi'$.

Recall that the distance between  two sets of propositions $\sigma, \sigma'$ is
$\infty$ if $\sigma\neq \sigma'$, and $0$ if  $\sigma = \sigma'$.
The distance between two  propositional traces is defined
to be the Skorokhod distance with the aforementioned metric on $2^{\prop}$.

%First, we define distances between timed propositional sequences.

% \begin{definition}[Distance between Timed Propositional Sequences]
% Given two timed propositional sequences  $\pi, \pi'$, we define the  propositional distance
% between them to be:
% \[
% \dist(\pi, \pi') = 
% \begin{cases}
% \infty & \text{ if } \pi_{\Sigma} \neq \pi'_{\Sigma}\\
% \max_{0\leq i < \len(\pi)} \abs{\pi_{\mytime}[i]- \pi'_{\mytime}[i]} & \text{ otherwise.}
% \end{cases} \hspace*{30mm}\qedhere
% \qed
% \]
% \end{definition}
% Informally, the above definition states that the distance between two timed sequences is $\infty$
% if the event sequences differ, otherwise, it is the maximum timing discrepancy between corresponding
% events.
% The distance corresponds closely to Skorokhod distances between propositional trajectories. [EXPAND]

Next, we define relaxations of \mTLTL{\small$(\TFun)$}formulae.
The relaxations are defined as a syntactic transformation 
on formulae which do not have negations, except on the propositions.
Every  \mTLTL{\small$(\TFun)$}formula can be expressed in this negation-normal form.
%We show how negations can be removed from \mTLTL{\small$(\TFun)$}formulae as follows.
To remove negations from the until operator, we use the waiting for operator, $\awaits$, defined as:
\begin{verse}
\vspace{-2mm}
$\pi\models_{\env} \phi_1 \awaits \phi_2$ iff either (1)~$\pi^t \models_{\env}\phi_1$ for all
 $ t\in I $;
or (2)~$\pi^t  \models_{\env}  \phi_2$  for some   $ t\in I $;
and $\pi^{t'}  \models_{\env} \phi_1\vee \phi_2$  for all  $t_0\leq t' < t$.
\vspace{-2mm}
\end{verse}
It can be showed that every  \mTLTL{\small$(\TFun)$} formula can be rewritten
using the $\awaits$ operator such that negations appear only over the propositions
(the procedure is given in the Appendix).

\vspace{-1mm}
\begin{definition}[$\delta$-relaxation of \mTLTL{\small$(\TFun)$} formulae]
\label{definition:Relaxation}
Let $\phi$ be a  \mTLTL{\small$(\TFun)$} formula in which negations appear only on the 
propositional symbols.
%%and let $\TFun$ be a set of functions $f(x_1, \dots, x_l)$ to
% $\reals$, 
% where each $x_i$ has domain $I_{\TFun}$ for
% $I_{\TFun}$  a closed interval of $\reals_+$.
The $\delta$ relaxation of $\phi$ (for $\delta\! \geq\! 0$) over  a closed interval $J$,
%$I_{\TFun}$, 
denoted 
$\myrelax_{J}^{\delta}(\phi)$, is defined as:
\vspace{-1mm}
% \begin{align*}
% \myrelax_{ I_{\TFun}}^{ \delta}(p) & = p;  \hspace{35mm}  \myrelax_{ I_{\TFun}}^{\delta}(\neg p) = \neg p;\\
% \myrelax_{ I_{\TFun}}^{\delta}(\true) & = \true;  \hspace{28mm} \myrelax_{ I_{\TFun}}^{\delta}(\false)  = \false;\\
% \myrelax_{ I_{\TFun}}^{\delta}( \phi_1 \wedge \phi_2 ) & = 
% \myrelax_{ I_{\TFun}}^{\delta}( \phi_1)  \wedge \myrelax_{ I_{\TFun}}^{\delta}(\phi_2 ); 
% \hspace{10mm} 
%  \myrelax_{ I_{\TFun}}^{\delta}( \phi_1 \vee \phi_2 )  = 
%  \myrelax_{ I_{\TFun}}^{\delta}( \phi_1)  \vee \myrelax_{ I_{\TFun}}^{\delta}(\phi_2 );\\
% \myrelax_{ I_{\TFun}}^{\delta}( x.\psi ) & = x.\myrelax_{ I_{\TFun}}^{\delta}(\psi);\\
% \myrelax_{ I_{\TFun}}^{\delta}(  \phi_1 \until \phi_2 )&  =
% \myrelax_{ I_{\TFun}}^{\delta}(\phi_1) \until \myrelax_{ I_{\TFun}}^{\delta}(\phi_2);
% \qquad\ 
% \myrelax_{ I_{\TFun}}^{\delta}(  \phi_1 \awaits \phi_2 )  =
% \myrelax_{ I_{\TFun}}^{\delta}(\phi_1) \awaits \myrelax_{ I_{\TFun}}^{\delta}(\phi_2)
% \end{align*}
%
\[
\begin{array}{l|l}
\begin{array}{lll}
\myrelax_{ J}^{ \delta}(p)     & = & p      \\
\myrelax_{ J}^{\delta}(\neg p) & = & \neg p \\
\myrelax_{ J}^{\delta}( \phi_1 \wedge \phi_2 ) & = &
\myrelax_{ J}^{\delta}( \phi_1)  \wedge \myrelax_{J}^{\delta}(\phi_2 ) \\
\myrelax_{ J}^{\delta}( x.\psi ) & = & x.\myrelax_{ J}^{\delta}(\psi)\\
\myrelax_{ J}^{\delta}(  \phi_1 \until \phi_2 ) & = &
\myrelax_{ J}^{\delta}(\phi_1) \until \myrelax_{J}^{\delta}(\phi_2) \\
\end{array} &
\begin{array}{lll}
\myrelax_{ J}^{\delta}(\true)  & = & \true \\
\myrelax_{ J}^{\delta}(\false) & = & \false \\
\myrelax_{ J}^{\delta}( \phi_1 \vee \phi_2 ) & = &
\myrelax_{ J}^{\delta}( \phi_1)  \vee \myrelax_{ J}^{\delta}(\phi_2 )\\
 & & \\
\myrelax_{ J}^{\delta}(  \phi_1 \awaits \phi_2 ) & = &
\myrelax_{ J}^{\delta}(\phi_1) \awaits \myrelax_{ J}^{\delta}(\phi_2) \\
\end{array}
\end{array}
\]
%
% \hspace{10mm} 
% \qquad\ 
% \end{align*}
%
%
\newcommand{\MyDef}{\mathrm{def}}
\newcommand{\eqdefU}{\ensuremath{\mathop{\overset{\MyDef}{=}}}}
\newcommand{\eqdef}{\mathop{\overset{\MyDef}{\resizebox{\widthof{\eqdefU}}{\heightof{=}}{=}}}}
\vspace{-0.4em}
\begin{equation}
\label{equation:RelaxProp}
\begin{array}{l}
\vspace{0.3em}
\myrelax_{ J}^{\delta}\left(f_{\mytime}(x_1, \dots, x_l)  \right) \sim 0) = 
\left\{
        \begin{array}{ll}
        f_{\mytime}(x_1, \dots, x_l) \,  +\,  K_{f_{\mytime}}^I(\delta) \  \sim\,  0 & \text{ if } 
         \sim\, \in\set{>, \geq}\\
        f_{\mytime}(x_1, \dots, x_l) \, -\,   K_{f_{\mytime}}^I(\delta) \  \sim \, 0 & \text{ if } 
         \sim \,\in\set{<, \leq},\\
        \end{array}
\right. \\
\vspace{0.3em}
\text{where } K_{f_{\mytime}}^I:[0, \max \tdom( J)\, - \, \min \tdom(
J)]\mapsto \reals_+ \text{, and}  \\
K_{f_{\mytime}}^I(\delta) \eqdef
\displaystyle\sup_{
  \begin{array}{l}
    {t_1,\dots,t_l\in  J}\\
    {t_1',\dots,t_l'\in  J}
  \end{array}
} \left\{
\left\arrowvert
  \begin{array}{c}
    f_{\mytime}(t_1,\dots, t_l) \\
    -\\
    f_{\mytime}(t_1',\dots,t_l')
  \end{array}\right\arrowvert \ 
\text{ s.t. }
  |t_i-t_i'| \leq \delta \text{ for all } i
\right\} 
\end{array}
\hspace{-3em}
\end{equation}
% \qed
\end{definition}
\vspace{-2mm}
Thus, instead of comparing the $f_{\mytime}()$ values to $0$, we relax by comparing
instead to $\pm K_{f_{\mytime}}^J(\delta)$.
The other cases recursively relax the subformulae.
The functions  $K_{f_{\mytime}}^J(\delta)$ define the maximal change in the value
of $f_{\mytime}$ that can occur when the input variables can vary by $\delta$.
The role of $J$ is the above definition is to restrict the domain of the freeze
quantifier variables to the time interval $J$ (from $\reals_+$) 
 in order to obtain the least possible relaxation on a given trace
$\pi$ (\emph{e.g.} we do not care about the values of a function in ${\TFun}$ outside
of the domain $\tdom(\pi)$ of the trace).

\vspace{-1mm}
\begin{example}[$\delta$-relaxation for Bounded Temporal Operators -- \MTL]
 We demonstrate how 
$\delta$-relaxation operates on bounded time constraints
through an example.
Consider an \MTL formula $\phi= Q \until_{[a,b]} R$.
This can be written as a  \TLTL formula, and relaxed using the
$ \myrelax_{\reals_+}^{\delta}$ function.
The relaxed  \TLTL formula is again equivalent to an \MTL
formula, namely $Q \until_{[a-2\cdot\delta\,,\, b+2\cdot \delta]} R$.
The details are explained in Example~\ref{example:RelaxationMTL}
in the Appendix.
\qed
\end{example}

\begin{theorem}[Transference for Propositional Traces]
\label{theorem:PropositonalRobustness}
Let $\pi, \pi'$ be two  timed propositional traces such that 
$\dist(\pi, \pi') < \delta$ for some finite $\delta$.
Let 
%$\TFun$ be a set of continuous functions, and let
$\phi$ be a closed  \mTLTL{\small$(\TFun)$} formula in negation-normal form.
If $\pi\models \phi$, then
$\pi'\models \myrelax_{I_{\pi,\pi'}}^{\delta}(\phi)$ 
where
${I_{\pi,\pi'}}$ is the convex hull of $\tdom(\pi) \cup \tdom(\pi')$.
\qed
\end{theorem}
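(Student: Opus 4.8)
The plan is to prove a single strengthened statement by structural induction on $\phi$, threading one fixed retiming simultaneously through the trace suffixes and the time environments. First I would unpack $\dist(\pi,\pi')<\delta$: since for propositional traces this distance is the Skorokhod distance, and the infimum is strictly below $\delta$, there is a retiming $\retime:\tdom(\pi)\mapsto\tdom(\pi')$ with $\norm{\retime-\iden}_{\sup}<\delta$ and $\norm{\pi-\pi'\circ\retime}_{\sup}<\delta$. The decisive consequence of the propositional metric ($\infty$ when the letters differ, $0$ otherwise) is that a finite bound $\delta$ on the value-mismatch component forces $\pi(t)=\pi'(\retime(t))$ for \emph{every} $t$: the retiming matches propositions exactly and only distorts time, by strictly less than $\delta$ at each point. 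I also record that $\retime$, being an order-preserving continuous bijection, sends $\min\tdom(\pi)$ to $\min\tdom(\pi')$, sends any $t$ to $\retime(t)$, and maps each half-open prefix $[\,\min\tdom(\pi),t)$ order-isomorphically and surjectively onto $[\,\min\tdom(\pi'),\retime(t))$.

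The induction hypothesis I would use is parameterized by a suffix point: fixing $\retime$, for every $t\in\tdom(\pi)$, every subformula $\psi$ of $\phi$ (in negation-normal form), and every environment $\env$ mapping the time variables into $\tdom(\pi)$ with retimed environment $\env':=\retime\circ\env$, one has $\pi^{t}\models_{\env}\psi\ \Rightarrow\ (\pi')^{\retime(t)}\models_{\env'}\myrelax_{I_{\pi,\pi'}}^{\delta}(\psi)$. The cases $p,\neg p,\true,\false$ are immediate because $\min\tdom(\pi^{t})=t$, $\min\tdom((\pi')^{\retime(t)})=\retime(t)$, and $\pi(t)=\pi'(\retime(t))$, while $\myrelax$ leaves them unchanged. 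The heart of the base case is the timing predicate $f_{\mytime}(x_1,\dots,x_l)\sim 0$: writing $t_i=\env(x_i)$ and $t_i'=\env'(x_i)=\retime(t_i)$, we have $|t_i-t_i'|=|t_i-\retime(t_i)|\le\norm{\retime-\iden}_{\sup}<\delta$, and both $t_i\in\tdom(\pi)$ and $t_i'\in\tdom(\pi')$ lie in $I_{\pi,\pi'}$, so the very definition of $K_{f_{\mytime}}^{I_{\pi,\pi'}}(\delta)$ gives $|f_{\mytime}(\overline{t})-f_{\mytime}(\overline{t'})|\le K$, abbreviating $K=K_{f_{\mytime}}^{I_{\pi,\pi'}}(\delta)$. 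For $\sim\,\in\set{>,\geq}$ this yields $f_{\mytime}(\overline{t'})\ge f_{\mytime}(\overline{t})-K\ge -K$, i.e.\ $f_{\mytime}(\overline{t'})+K\sim 0$, which is exactly the relaxed predicate; the cases $\sim\,\in\set{<,\leq}$ are symmetric. This is precisely why $\myrelax$ adds or subtracts $K$, and why $J$ is taken to be the convex hull $I_{\pi,\pi'}$: it must contain every frozen time coming from either trace.

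For the inductive step, $\wedge$ and $\vee$ are direct since $\myrelax$ commutes with them. For the freeze quantifier $x.\psi$, evaluation rebinds $x$ to $\min\tdom(\pi^{t})=t$ on the left and to $\retime(t)$ on the right, so the updated environments $\env[x:=t]$ and $\env'[x:=\retime(t)]$ are again related by $\retime$, and the IH applies. For $\phi_1\until\phi_2$ with witness time $t$ (inside the current suffix domain), set $s=\retime(t)$; the IH on the suffixes gives $(\pi')^{s}\models_{\env'}\myrelax(\phi_2)$, and since $\retime$ maps the prefix below $t$ surjectively onto the prefix below $s$, every $s'<s$ equals $\retime(t')$ for a unique $t'<t$, so the IH transfers the $\phi_1\vee\phi_2$ obligation at each such point, establishing the relaxed until at witness $s$. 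The weak-until $\awaits$ is identical in its case~(2), while its case~(1) uses surjectivity of $\retime$ to lift $\pi^{t}\models_{\env}\phi_1$ for all $t$ to $(\pi')^{s}\models_{\env'}\myrelax(\phi_1)$ for all $s$. Finally, since $\phi$ is closed the initial environment is immaterial and $\myrelax_{I_{\pi,\pi'}}^{\delta}(\phi)$ is again closed, so instantiating the IH at $t=\min\tdom(\pi)$ and $\psi=\phi$ gives $\pi'\models\myrelax_{I_{\pi,\pi'}}^{\delta}(\phi)$.

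The main obstacle I anticipate is setting up the induction hypothesis so that a \emph{single} global retiming $\retime$ does double duty: its restriction witnesses Skorokhod closeness of every pair of suffixes, while the relation $\env'=\retime\circ\env$ keeps the environments aligned even when frozen variables point to times lying \emph{outside} the current suffix. That second feature is exactly what forces the relaxation domain $J$ to be the full convex hull $I_{\pi,\pi'}$ rather than the shrinking suffix domain, since $K_{f_{\mytime}}$ must bound the variation of $f_{\mytime}$ over all frozen times that can ever occur. Once this invariant is in place, order-preservation and bijectivity of retimings make the temporal cases routine, and the only genuinely quantitative step is the predicate case, discharged directly by the definition of $K_{f_{\mytime}}^{I_{\pi,\pi'}}$.
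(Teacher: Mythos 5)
Your proof is correct and follows essentially the same route as the paper's: both extract from $\dist(\pi,\pi')<\delta$ a retiming $\retime$ that matches propositions exactly (finiteness of the propositional distance forces $\pi(t)=\pi'(\retime(t))$) while shifting every time-point by less than $\delta$, both transport temporal witnesses and freeze-variable bindings through $\retime$, and both discharge the predicate case directly from the definition of $K_{f_{\mytime}}^{I_{\pi,\pi'}}(\delta)$. The only difference is organizational: the paper argues in two passes (first untimed satisfaction via $\untime(\phi)$, then relaxation of the frozen timing constraints under the assignment $\kappa_{\pi'}=\retime\circ\kappa_{\pi}$), whereas you fold both into a single suffix-indexed induction with the invariant $\env'=\retime\circ\env$, which is a somewhat cleaner but mathematically equivalent presentation.
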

Theorem~\ref{theorem:PropositonalRobustness} relaxes the freeze variables
over the entire signal time-range ${I_{\pi,\pi'}}$;
it can be strengthened by relaxing over a smaller range:
if $\pi\models \phi$, and $t_1, \dots, t_k$ are time-stamp assignments to the
freeze variables $x_1, \dots, x_k$  which witness $\pi$ satisfying $\phi$,
then $x_i$ only needs to be relaxed over $[t_i-\delta, t_i+\delta]$ rather
than  the larger interval ${I_{\pi,\pi'}}$.
These smaller relaxation intervals for the freeze variables can be incorporated
in Equation~\ref{equation:RelaxProp}.
We omit the details for ease of presentation.

\begin{example}
\label{example:Transference-one}
Recall Example~\ref{example:TemporalConstraints}, and the
formula $\phi$ presented in it.
Suppose a flow $\pi$ satisfies $\phi$; and let $\pi'$ be $\delta$ close to $\pi$
%for a finite $\delta$,
under the Skorokhod metric (for propositional traces).
Our robustness theorem ensures that
(i)~$\pi'$ will satisfy the same untimed formula
$Q\rightarrow \, \Diamond \left(R \wedge \Diamond S\right) $; and
(ii) it gives a bound on how much the timing constraints need to be relaxed in $\phi$
in order
to ensure satisfaction by $\pi'$;  it states that $\pi'$ satisfies the following relaxed
formula $\phi'$ for every $\epsilon >0$:
\[\pi'\models
x. \left( Q \rightarrow \Diamond
\big(y. \left(R \wedge
 \Diamond\left[z. \left(S \wedge
\left(  (y-x)^2 + (z-y)^2 + (z-x)^2 \leq d^{\dagger} \right) \right)
\right] \right)
\big)\right)\]
where 
$d^{\dagger} = d + 12 \cdot(\delta+\epsilon)^2 + 4\sqrt{3}\cdot(\delta+\epsilon)
\cdot \sqrt{d}$.
The constant $d^{\dagger} $ is derived in the appendix.
\qed
\end{example}

\vspace{-5mm}
\subsection{Transference of \TLTL properties for $\reals^n$-valued Signals}
%{Robust Conformance of Real-Valued Signals to \TLTL}
\vspace{-1mm}
A \emph{timed $\reals^n$-valued trace} 
$\pi$ is a function from a closed interval $I$ of $\reals_+$ to $\reals^n$.
% $\pi= (\ol{\alpha}_0, t_0), \dots, (\ol{\alpha}_m, t_m)$ is
% a  sequence of pairs $ (\ol{\alpha}_i, t_i)$ such that $ \ol{\alpha}_i \in \reals^n$  and
% $t_i\in \reals_+$ with $t_{i+1} > t_i$ for all $0\leq i<m$.
For $ \ol{\alpha} = (\alpha^0, \dots, \alpha^n)\in \reals^n$, we denote the
$k$-th dimensional value $\alpha^k$ as  $\ol{\alpha}[k]$.
The $\pi$ projected function onto the $k$-th  $\reals$ dimension   is
%of $\pi$ onto the $k$-th $\reals$ dimension is
denoted by $\pi_{k}: I \mapsto \reals$.
% The sequence $\ol{\alpha}_0, \dots, \ol{\alpha}_m$ is denoted as  $\pi_{\mysig}$ 
% The time sequence $t_0, t_1,\dots, t_m$ is denoted as  $\pi_{\mytime}$ as in the 
% propositional case.

In order to define the satisfaction of  \TLTL formulae  over  timed $\reals^n$-valued 
sequences,
we use booleanizing predicates $\mu: \reals^n \mapsto \bool$, as in 
\STL~\cite{DonzeM10},   to transform
$\reals^n$-valued sequences in to timed propositional sequences.
These predicates are part of the logical specification.
In this work, we restrict our attention to traces and predicates such that each predicate
varies only finitely often on the finite time traces under consideration.

\vspace{-1mm}
\begin{definition}[\mTLTL{\small$(\TFun, \SFun)$} Syntax]
Given % a set of propositions $\prop$,
 a set of variables $V_{\mytime}$ (the freeze
variables), a set of \emph{ordered} variables
 $V_{\mysig}$ (the signal variables),
and two sets $\TFun, \SFun$ of
 functions,
the formulae of  \mTLTL{\small$(\TFun, \SFun)$}
are defined by the grammar:
\[
\phi := \true \mid f_{\mytime}(\overline{x}) \sim 0  \mid 
f_{\mysig}(\overline{y}) \sim 0  \mid 
\neg\phi \mid \phi_1 \wedge \phi_2 \mid \phi_1\vee \phi_2\mid \phi_1 \until \phi_2 \mid x.\phi 
\quad \text{ where}
\]
\begin{compactitem}
% \item 
% $p\in \prop$;
\item   $x\in V_{\mytime}$, and $\overline{x} = (x_1, \dots, x_l)$ with $x_i\in V_{\mytime}$ for all $1\leq i\leq l$;
\item   $\overline{y} = (y_1, \dots, y_d)$ with $y_j\in V_{\mysig}$ for all $1\leq j \leq d$;
\item $V_{\mytime} $ and  $V_{\mysig}$ are disjoint;
\item  $f_{\mytime} \in \TFun$ and  $f_{\mysig} \in \SFun$ are  real-valued functions,
%$\sum_i a_i \vdot x_i  \sim 0$ where $x_i\in V$ and $a_i$ is a rational number, 
and $\sim $ is $ \leq, <, \geq, $ or $>$.\qed
\end{compactitem}

\end{definition}

\vspace{-1mm}
The semantics of \mTLTL{\small$(\TFun, \SFun)$} is straightforward and similar to the propositional case (Definition~\ref{definition:PropositionalSemantics}).
The only new ingredients are the booleanizing predicates
$f_{\mysig}(\overline{y}) \sim 0 $: we define
$\pi \models_{\env} f_{\mysig}(y_1, \dots, y_d) \sim 0$
iff $ f_{\mysig}( \pi_{j_1}[t_0], \dots, \pi_{j_d}[t_0]) \sim 0$ for any freeze variable 
environment $\env$, where $t_0 = \min \tdom(\pi)$, and $y_i$ is the $j_i$-th
variable in $V_{\mysig}$ (\emph{i.e.}, $y_i$ refers to the $j_i$-th dimension
in the signal trace).
We require that for a timed $\reals^n$-valued trace
$\pi$ to satisfy $\phi$, the arity of the functions in $ \SFun$
occurring in $\phi$ should not be more than $n$, that is, functions should not refer
to dimensions greater than $n$ for an $\reals^n$ trace.

\smallskip\noindent\textbf{$\delta$ relaxation of  \mTLTL{\small$(\TFun, \SFun)$}.}
Let $ \jmap$ be a mapping from $V_{\mysig}$  to closed intervals
of $\reals$, thus  $\jmap(z)$ denotes a sub-domain of $z\in V_{\mysig}$.
The relaxation function $\myrelax_{{J}, \jmap}^{\delta}$ which
operates on  \mTLTL{\small$(\TFun, \SFun)$} formulae is defined
analogous to the relaxation function $\myrelax_{J}^{\delta}$
in Definition~\ref{definition:Relaxation}.
We omit the similar cases, and only present the new case for the predicates
formed from $ \SFun$ (the full definition can be found in the appendix).
\vspace{-3mm}
\[
\myrelax_{J, \jmap}^{\delta}\left(f_{\mysig}(z_1, \dots, z_l)  \right) \sim 0)  
\  = 
\begin{cases}
f_{\mysig}(z_1, \dots, z_l) \,  +\,  K_{f_{\mysig}}(\delta) \  \sim\,  0 & \text{ if } 
 \sim\, \in\set{>, \geq};\\
f_{\mysig}(z_1, \dots, z_l) \, -\,   K_{f_{\mysig}}(\delta) \  \sim \, 0 & \text{ if } 
 \sim \,\in\set{<, \leq}
\end{cases}
\]
%& \qquad \text{ where }
where
$K_{f_{\mysig}}: \big[0,\  \max_{z\in V_{\mysig}} |\max \jmap(z) \, - \, \min\jmap(z)| \big]
\mapsto \reals_+$
is a function s.t.
\vspace{-1mm}
\[
 K_{f_{\mysig}}(\delta) = 
\sup_{
  \begin{array}{c}
    z_i\in \jmap(z_i); \,  z'_i\in \jmap(z'_i)\\
    \text{ for all } i 
 \end{array}
}
\left\{
\left\arrowvert
  \begin{array}{c}
    f_{\mysig}(z_1,\dots, z_l) \\
    -\\
    f_{\mysig}(z_1',\dots,z_l') 
  \end{array}\right\arrowvert
\text{ s.t. }
  |z_i-z_i'| \leq \delta \text{ for all } i
\right\}.
\]
The functions  $K_{f_{\mysig}}(\delta)$ define the maximal change in the value
of $f_{\mysig}$ that can occur when the input variables can vary by $\delta$ over the
intervals in $\jmap(z)$ and $J$.
The role of $\jmap$ in the above definition is to restrict the domain of the signal variables
in order to obtain the least possible  relaxation bounds on the signal
constraints; as was done in Definition~\ref{definition:Relaxation} for the
freeze variables.

\begin{theorem}[Transference for $\reals^n$-valued  Traces]
\label{theorem:SignalTLTLRobustness}
Let $\pi, \pi'$ be two $\reals^n$-valued traces such 
the Skorokhod distance between them is less than $\delta$  for some finite $\delta$.
Let $\phi$ be a closed  \mTLTL{\small$(\TFun, \SFun)$} formula in negation-normal form.
If $\pi \models \phi$, then $\pi'\models \myrelax_{I_{\pi, \pi'}, \imap}^{\delta}(\phi)$, where
\begin{compactitem}
\item ${I_{\pi,\pi'}}$ is the convex hull of $\tdom(\pi) \cup \tdom(\pi')$; and
\item $\imap(z)$
is the convex hull of
$\set{\pi(t)[k] \mid t\in \tdom(\pi)} \cup \set{\pi'(t)[k] \mid t\in \tdom(\pi')} $;
  where $z$ is the $k$-th variable in the ordered
set $V_{\mysig}$.\qed
\end{compactitem}
\end{theorem}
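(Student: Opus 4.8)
The plan is to mirror the proof of Theorem~\ref{theorem:PropositonalRobustness}, since the only genuinely new ingredient is the treatment of the signal predicates $f_{\mysig}(\overline{y})\sim 0$. First I would unpack the hypothesis $\dist_{\skoro}(\pi,\pi')<\delta$: because the Skorokhod distance (Equation~\ref{equation:Skoro}) is an infimum over retimings, I may fix a single retiming $\retime:\tdom(\pi)\mapsto\tdom(\pi')$ with
\[
\norm{\retime-\iden}_{\sup}<\delta \quad\text{and}\quad \norm{\pi\,-\,\pi'\circ\retime}_{\sup}<\delta .
\]
The first bound says that every time point is displaced by less than $\delta$; the second, together with the fact that the metric on $\reals^n$ (e.g.\ the $L_2$ norm) dominates each coordinatewise difference, gives $|\pi_k(t)-\pi'_k(\retime(t))|<\delta$ for every $t$ and every coordinate $k$, with both values lying in $\imap(z)$ when $z$ is the $k$-th signal variable.

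Next I would prove by structural induction on a negation-normal subformula $\psi$ the strengthened claim: for every $t\in\tdom(\pi)$ and every environment $\env$, if $\pi^t\models_{\env}\psi$ then $(\pi')^{\retime(t)}\models_{\retime\circ\env}\myrelax_{I_{\pi,\pi'},\imap}^{\delta}(\psi)$, where $\retime\circ\env$ is the retimed environment sending each freeze variable $x$ to $\retime(\env(x))$. Retiming the environment is what makes the freeze predicates work: $|\retime(\env(x))-\env(x)|<\delta$ for every $x$, so any time predicate $f_{\mytime}(\overline{x})\sim 0$ satisfied under $\env$ is satisfied under $\retime\circ\env$ after shifting the threshold by $K_{f_{\mytime}}(\delta)$, exactly as in Equation~\ref{equation:RelaxProp}. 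The new base case is the signal predicate: $\pi^t\models_{\env}f_{\mysig}(\overline{y})\sim 0$ evaluates $f_{\mysig}$ at the values $\pi_{j_i}(t)$, while $(\pi')^{\retime(t)}$ evaluates it at $\pi'_{j_i}(\retime(t))$; since these inputs differ coordinatewise by less than $\delta$ and all lie in $\imap$, the output changes by at most $K_{f_{\mysig}}(\delta)$, which is precisely the amount by which $\myrelax$ shifts the threshold, so the relaxed predicate holds. The propositional, $\true$, $\wedge$, $\vee$, and freeze-quantifier cases are identical to the propositional theorem ($\myrelax$ commutes with the Boolean connectives, and $x.\psi$ simply rebinds $x$ to the new current time $\retime(t)$).

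The hard part will be the temporal case $\phi_1\until\phi_2$ (and its dual $\phi_1\awaits\phi_2$), since this is where order-preservation of $\retime$ must be exploited. If $\pi^t\models_{\env}\phi_1\until\phi_2$, there is a witness $s\geq t$ with $\pi^s\models_{\env}\phi_2$ and $\pi^{s'}\models_{\env}\phi_1\vee\phi_2$ for all $t\leq s'<s$. Because $\retime$ is a continuous, strictly increasing bijection, the image $\set{\retime(s'):t\leq s'<s}$ is exactly the interval $\set{u:\retime(t)\leq u<\retime(s)}$ of $\tdom(\pi')$, so applying the induction hypothesis pointwise transfers the witness $\retime(s)$ for $\myrelax(\phi_2)$ and the intermediate obligation for $\myrelax(\phi_1)\vee\myrelax(\phi_2)$ to $(\pi')^{\retime(t)}$ with no gap; this is precisely the place where the loss of order-preservation in the competing $(T,J,(\tau,\epsilon))$-closeness notion would break the argument. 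I would also check that the restriction of $\retime$ to any suffix is again a witnessing retiming (both sup-bounds only decrease on subintervals), so the same $\retime$ can be reused throughout the recursion. Finally, since $\phi$ is closed, $\pi\models\phi$ is witnessed by some environment $\env$; instantiating the claim at $t=\min\tdom(\pi)$ with this $\env$ yields $\pi'\models_{\retime\circ\env}\myrelax_{I_{\pi,\pi'},\imap}^{\delta}(\phi)$, hence $\pi'\models\myrelax_{I_{\pi,\pi'},\imap}^{\delta}(\phi)$, as required.
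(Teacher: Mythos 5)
Your proof is correct, but it follows a genuinely different route from the paper's. The paper proves Theorem~\ref{theorem:SignalTLTLRobustness} by \emph{reduction} to the propositional case: for each signal constraint $f_{\mysig}(\overline{y}) \sim 0$ in $\phi$ it introduces a proposition $p_{f_{\mysig}}$ whose semantics is the \emph{relaxed} constraint $\myrelax_{I_{\pi,\pi'},\imap}^{\delta}\left(f_{\mysig}(\overline{y}) \sim 0\right)$, booleanizes $\pi$ and $\pi'$ into propositional traces over these propositions, asserts that the resulting propositional traces are within Skorokhod distance $\delta$, and then applies Theorem~\ref{theorem:PropositonalRobustness} to the formula $\phi_{\prop}$ obtained by substituting $p_{f_{\mysig}}$ for each signal constraint. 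You instead rerun the structural induction of the propositional proof directly on $\reals^n$-valued traces, adding the signal predicates as a new base case (coordinatewise $\delta$-closeness under the fixed retiming, absorbed by $K_{f_{\mysig}}(\delta)$), and re-proving the until/await and freeze-quantifier cases. The trade-off is instructive: the paper's reduction is shorter and reuses the propositional machinery wholesale, but its pivotal step --- that the booleanized traces are $\delta$-close in the propositional Skorokhod metric --- is delicate, because booleanization is a threshold operation: $\delta$-closeness of the real values does not force the boolean values to agree under the retiming (a predicate can hold marginally at $\pi(t)$ yet fail marginally at $\pi'(\retime(t))$, making the propositional distance infinite). Making that step airtight requires an asymmetric booleanization ($\pi$ under the original predicates, $\pi'$ under the relaxed ones) together with a one-sided, monotone variant of Theorem~\ref{theorem:PropositonalRobustness} for formulas whose propositions occur only positively --- which is available here since negation-normal form pushes negations into the inequalities. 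Your direct induction sidesteps this subtlety entirely: the value mismatch is handled once, at the leaves, by the definition of $K_{f_{\mysig}}$, at the cost of repeating the temporal cases that the reduction gets for free.
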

\vspace{-2mm}
Theorem~\ref{theorem:SignalTLTLRobustness} can be strengthened similar to
the strengthening mentioned for Theorem~\ref{theorem:PropositonalRobustness}
by relaxing the variables over smaller intervals obtained from
assignments to variables which witness $\pi\models \phi$.

\vspace{-1mm}
\begin{example}[Spatial Constraints and Transference]
Recall Example~\ref{example:TemporalConstraints}, 
suppose that the events $Q,R,S$ are defined by the 
following predicates over real variables $\alpha_1$ and $\alpha_2$.
Let $Q \equiv \alpha_1 + 10\vdot\alpha_2 \geq 3$; the
predicate $R \equiv |\alpha_1| + |\alpha_2| \leq 20$; and
$S \equiv |\alpha_1| + |\alpha_2| \leq 15$.
Let $\pi$ satisfy this formula with these predicates, and let 
$\pi'$ be $\delta$ close to $\pi$,
for a finite $\delta$ under the Skorokhod metric for $\reals^2$.
Our robustness theorem ensures that
$\pi'$ will satisfy the relaxed formula 
\[
x. \left( Q^{\delta} \rightarrow \Diamond 
\big(y. \left(R^{\delta} \wedge 
 \Diamond\left[z. \left(S^{\delta} \wedge  
\left(  (y-x)^2 + (z-y)^2 + (z-x)^2 \leq d+12\vdot \delta^2 \right) \right)
\right] \right)
\big)\right). \]
where the relaxed predicates  $Q^{\delta},R^{\delta},S^{\delta}$
are defined as follows:
$Q^{\delta}  \equiv \alpha_1 + 10\vdot\alpha_2 \geq 3- 22\vdot\delta$;
and $R^{\delta} \equiv |\alpha_1| + |\alpha_2| \leq 20+4\vdot\delta$;
and $S^{\delta} \equiv |\alpha_1| + |\alpha_2| \leq 15+4\vdot\delta$.
\qed
\end{example}

% \begin{definition}[Distance between Timed $\reals^n$-valued Sequences]
% Given two timed   $\reals^n$-valued  sequences  $\pi, \pi'$, we define the  
% distance
% between them to be:
% \[
% \dist(\pi, \pi') = 
% \begin{cases}
% \infty & \text{ if } \len(\pi) \neq \len(\pi')\\
% \max_{0\leq i < \len(\pi)} 
% \max\big(
% \abs{\pi_{\mytime}[i]- \pi'_{\mytime}[i]}\, ,\,  
% \norm{\pi_{\mysig}[i]-\pi'_{\mysig}[i]}_{L_2}
% \big)& \text{ otherwise.}
% \end{cases} \hspace*{10mm}\qedhere
% \qed
% \]
% \end{definition}

\begin{comment}
\subsection{Quantitative Semantics of \TLTL on Real-Valued Signals}

\begin{defintion}[Quantitative Semantics]
Let $\phi$ be a  \mTLTL{\small$(\TFun, \SFun)$}  formula.
The function $\symb{\cdot}$ i assigns values in
$\reals \cup\set{-\infty, +\infty}$ to timed traces as follows.
For a timed  trace
$\pi: I \maptso \reals^n$,
\begin{align*}

\end{align*}

\end{defintion}
\end{comment}

\newcommand{\ie}{i.e.\xspace}
\section{Experimental Evaluation}
\label{section:Experiment}

\mypara{Skorokhod Distance Computation Benchmark} The Skorokhod
distance is computed with the help of a streaming, sliding window
monitoring routine which checks for a fixed $\delta$ whether the
linear interpolations of two time-sampled traces are at most $\delta$
away from each other.  The least such $\delta$ value is computed by
binary search over the monitoring routine.  The upper limit of the
search range is set to  the pointwise metric (\emph{i.e} assuming the
identity retiming) between the two traces.  The traces to the
Skorokhod procedure are pre-scaled, each dimension (and the
time-stamp) is scaled by a different constant.  The constants are
chosen so that after scaling, one unit of deviation in one dimension
is as undesirable as one unit of jitter in other dimensions.  We next
present a benchmark on the distance computing routine.

Consider the hybrid dynamical system $\system_1$ shown in
Fig.~\ref{fig:watertank}.  The system consists of two water tanks,
each with an outlet from which water drains at a constant rate $d_j$.
Both tanks share a single inlet pipe that is switched between the
tanks, filling only one tank at any given time at a constant inflow
rate of $i$. When the water-level in tank $j$ falls below level
$\ell_j$, the pipe switches to fill it.  The drain and inflow rates
$d_1$, $d_2$ and $i$ are assumed to be inputs to the system. Now
consider a version $\system_2$ that incorporates an actuation delay
that is a function of the inflow rate. This means that after the level
drops to $\ell_j$ for tank $j$, the inlet pipe starts filling it only
after a finite time.  $\system_1$ and $\system_2$ have the same
initial water level.  We perform a fixed number of simulations by
systematically choosing drain and inflow rates $d_1$, $d_2$, $i$ to
generate traces (water-level vs. time) of both systems and compute
their Skorokhod distance.  We summarize the results in
Table~\ref{tab:distance_evaluation}.

\vspace{-1mm}

\begin{figure}[t]
\centering
\begin{tikzpicture}
\tikzstyle{smalltext}=[font=\fontsize{8}{8}\selectfont]
\tikzstyle{state}=[draw,rectangle,rounded corners,minimum width=4.5em,smalltext]

\node[state] (s1) {%
$\left[\begin{array}{l}
       \dot{h_1} \\
       \dot{h_2} \\
       \end{array}\right]\!\! = \!\! 
 \left[\begin{array}{l}
       i - d_1 \\
       -d_2 \\
       \end{array}\right]$};

\node[state,right of=s1,node distance=40mm] (s2) {%
$\left[\begin{array}{l}
       \dot{h_1} \\
       \dot{h_2} \\
       \end{array}\right]\!\! = \!\! 
 \left[\begin{array}{l}
       - d_1 \\
       i -d_2 \\
       \end{array}\right]$};

\draw[->,>=latex'] (s1) to[out=30,in=150] node[smalltext,above] {$h_2 < \ell_2$} (s2); 
\draw[->,>=latex'] (s2) to[out=210,in=330] node[smalltext,below] {$h_1 < \ell_1$} (s1); 

\end{tikzpicture}
\vspace{-2mm}
\caption{System $\system_1$ used for benchmarking Skorokhod Distance computation.
Inflow rate $i$, Drain rate $d_1$ for tank $1$ and $d_2$ for tank $2$ are
all inputs to the system.\label{fig:watertank}}
\vspace{-2mm}
\end{figure}
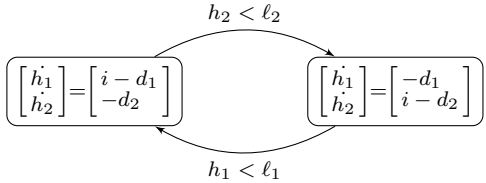

\newcommand{\muc}[1]{\multicolumn{2}{c}{#1}}

\begin{table}[t]
\caption{Benchmarking the computation of $\dist_{\skoro}(\pi_1,\pi_2)$, where
$\pi_1$ is a trace of system $\system_1$ described in Fig.~\ref{fig:watertank}, and
$\pi_2$ is a trace of system $\system_2$, which is $\system_1$ with an actuation delay. 
$\dist_2$ is the naive pointwise distance.
Both
$\pi_1$ and $\pi_2$ contain equally spaced $2001$ time points over a simulation
horizon of $100$ seconds.  \label{tab:distance_evaluation}}
\centering
\begin{tabular*}{0.99\textwidth}{@{\extracolsep{\fill}}lrrrr}
\toprule
Window size & Avg. $\dist_{\skoro}$ & \muc{Avg. Time taken (secs)} & $\max\frac{\dist_2 - \dist_\skoro}{\dist_2}$  \\
\cline{3-4}
            &                       & Computation & Monitoring &                 \\
\midrule
20          & 8.58                  &   0.81      &  0.13      &   0.09             \\
40          & 8.35                  &   1.55      &  0.26      &   0.18          \\
60          & 8.09                  &   2.31      &  0.39      &   0.26          \\
80          & 7.88                  &   3.05      &  0.52      &   0.33          \\
100         & 7.72                  &   3.77      &  0.64      &   0.38          \\
\midrule
\bottomrule
\end{tabular*}
\vspace{2mm}
\end{table}

Recall that $\dist_\skoro$ (the Skorokhod distance) computation involves a
sequence of monitoring calls with different $\delta$ values picked by a
bisection-search procedure. Thus, the total time to compute $\dist_\skoro$
is the sum over the computation times for individual monitoring calls plus
some bookkeeping. In Table~\ref{tab:distance_evaluation}, we make a
distinction between the average time to monitor traces (given a $\delta$
value), and the average time to compute $\dist_\skoro$.  There are an
average of $6$ monitoring calls per $\dist_\skoro$ computation.  We ran
$64$ simulations by choosing different input values, and then computing
$\dist_\skoro$ for increasing window sizes.  As the window size increases,
the average $\dist_\skoro$ is seen to decrease; this is expected as a
better match may be achieved in a larger window. The computation time is
also seen to increase linearly, as postulated by
Theorem~\ref{theorem:SkoroFinal}.  Finally, we see that the Skorokhod
distance is less aggressive at classifying traces as distant (as shown by
its lower overall numbers) than a simpler metric $\dist_2$ (defined as as
the maximum of the pointwise $L_2$ norm\footnote{Even though the difference
is only $38\%$ with respect to the pointwise metric, this difference is
amplified in the original state value domain, as in the experiment, the
input state values to the Skorokhod routine were scaled by $0.1$.}). We can
see this discrepancy becomes more prominent with increased window size
(because of better matches being available). 

% \vspace{-2mm}
% \subsection{Case Studies}

\mypara{Case Study: LQR-based Controller} The first case study is an
example of an aircraft pitch control application taken from the openly
accessible control tutorials for Matlab and Simulink \cite{ctms}. The
authors describe a linear dynamical system of the form:
$\dot{\mathbf{x}} = (A-BK)\mathbf{x} + B\theta_{des}$.  Here,
$\mathbf{x}$ describes the vector of continuous state variables and
$\theta_{des}$ is the desired reference provided as an external input.
One of the states in the $\mathbf{x}$ vector is the pitch angle
$\theta$, which is also the system output.  The controller gain matrix
$K$ is computed using the linear quadratic regulator method
\cite{lqr}, a standard technique from optimal control.  We are
interested in studying a digital implementation of the continuous-time
controller obtained using the LQR method. To do so, we consider
sampled-data control where the controller samples the plant output,
computes, and provides the control input to the plant every $\Delta$
seconds. To model sensor delay, we add a fixed delay element to the
system; thus, the overall system now represents a delay-differential
equation.

Control engineers are typically interested in the step response of a
system. In particular, quantities such as the overshoot/undershoot of
the output signal (maximum positive/negative deviation from a
reference value) and the settling time (time it takes for transient
behaviors to converge to some small region around the reference value)
are of interest. Given a settling time and overshoot for the first
system, we would like the second system to display similar
characteristics. We remark that both of these properties can be
expressed in \STL, see \cite{hscc14benchmark} for details.  We
quantify system conformance (and thereby adherence to requirements) in
terms of the Skorokhod distance, or, in other words, maximum permitted
time/space-jitter value $\delta$. For this system, we know that at
nominal conditions, the settling time is approximately $2.5$ seconds,
and that we can tolerate an increase in settling time of about $0.5$
seconds. Thus, we chose a time-saling factor of $2 = \frac{1}{0.5}$.
We observe that the range of $\theta$ is about $0.4$ radians, and
specify an overshoot of $20\%$ of this range as being permissible.
Thus, we pick a scaling factor of $0.08$ for the signal domain. In
other words, Skorokhod distance $\delta = 1$ corresponds to either a
time-jitter of $0.5$ seconds, or a space-discrepancy of $0.08$
radians.

We summarize the results of conformance testing for different values
of sampling time $\Delta$ in Table~\ref{table:example1}.  It is clear
that the conformance of the systems decreases with increasing $\Delta$
(which is to be expected). The time taken to compute the Skorokhod
distance decreases with increasing $\Delta$, as the number of
time-points in the two traces decreases.

\begin{table}[t] \centering 
\caption{Variation in Skorokhod Distance with changing sampling time for an
aircraft pitch control system with an LQR-based controller. Time taken
indicates the total time spent in computing the upper bound on the
Skorokhod distance across all simulations.  We scale the signals such that
a time-jitter of $0.5$ seconds, is treated the same as a value-difference
of $0.08$ radians, and the window size chosen is $150$. The system is
simulated for $5$ seconds, with a variable-step solver.
\label{table:example1}}

\begin{tabular*}{.8\textwidth}{@{\extracolsep{\fill}}lrrr}
\toprule
Controller   &  Skorokhod & Time taken (seconds)         & Number of  \\
Sample-Time  &  distance  & to compute $\dist_{\skoro}$  & simulations \\
(seconds)    &            &                              & \\
\midrule
0.01         &   0.012    & 232                          & 104\\
0.05         &   0.049    &  96                          & 104\\
0.1          &   0.11     &  70                          & 106\\
0.3          &   0.39     &  45                          & 104\\
0.5          &   1.51     &  40                          & 101\\
\bottomrule
\end{tabular*}
\end{table}

\mypara{Case Study: Air-Fuel Ratio Controller} In
\cite{hscc14benchmark}, the authors present three systems representing
an air-fuel ratio ($\lambda$) controller for a gasoline engine, that
regulate $\lambda$ to a given reference value of $\lambda_{\text{ref}} =
14.7$.  Of interest to us are the second and the third systems.  The
former has a continuous-time plant model with highly nonlinear
dynamics, and a discrete-time controller model.  In
\cite{hscc14lyapunov}, the authors present a version of this system
where the controller is also continuous.  We take this to be
$\system_1$. The third system in \cite{hscc14benchmark} is a
continuous-time closed-loop system where all the system differential
equations have right-hand-sides that are polynomial approximations of
the nonlinear dynamics in $\system_1$. We call this polynomial
dynamical system $\system_2$. The rationale for these system versions
is as follows: existing formal methods tools cannot reason about
highly nonlinear dynamical systems, but tools such as Flow*
\cite{flowstar}, C2E2 \cite{c2e2}, and CORA \cite{althoff} demonstrate
good capabilities for polynomial dynamical systems. Thus, the hope is
to analyze the simpler systems instead. In \cite{hscc14benchmark}, the
authors comment that the system transformations are not accompanied by
formal guarantees.  By quantifying the difference in the system
behaviors, we hope to show that if the system $\system_2$ satisfies
the temporal requirements $\varphi$ presented in
\cite{hscc14benchmark}, then $\system_1$ satisfies a moderate
relaxation of $\varphi$. We pick a scaling factor of $2$ for the time
domain, as a time-jitter of $0.5$ seconds is the maximum deviation we
wish to tolerate in the settling time, and pick $0.68 =
\frac{1}{0.1*\lambda_{\text{ref}}}$ as the scaling factor for $\lambda$
(which corresponds to the worst case tolerated discrepancy in the
overshoot).

The results of conformance testing for these systems are summarized in
Table~\ref{table:example2}. In \cite{arch14benchmark}, the authors posed a
challenge problem for conformance testing. In it, the authors
reported that the original nonlinear system and the approximate polynomial
system both satisfy the \STL requirements specifying overshoot/undershoot
and settling time.  We, however, found an input that causes the outputs of
the two systems to have a high Skorokhod distance.  Thus, comparing the two
systems by considering equi-satisfaction of a given set of \STL
requirements such as overshoot/undershoot and settling time may not always
be sufficient, and our experiment indicates that the more nuanced Skorokhod
metric may be a better measure of conformance. 

% Further inspection revealed that in this
% particular case, the polynomial system had response time that was significantly
% worse than that of the original nonlinear system, though both systems satisfied
% the settling time requirement. 

\begin{table}[t] 
\centering 
\caption{Conformance testing for closed-loop A/F ratio controller at
different engine speeds. We scale the signals such that 0.5 seconds of
time-jitter is treated equivalent to 10\% of the steady-state value (14.7)
of the A/F ratio signal. The simulation traces correspond to a time horizon
of $10$ seconds, and the window size is $300$.  \label{table:example2}}

\begin{tabular*}{.8\textwidth}{@{\extracolsep{\fill}}lrrrr}
\toprule
Engine       & Skorokhod & Computation & Total Time   & Number of \\
speed (rpm)  & distance  & Time (secs) & Taken (secs) & simulations \\
\midrule
1000         & 0.31      & 218         & 544           &  700 \\
1500         & 0.20      & 240         & 553           &  700 \\
2000         & 0.27      & 223         & 532           &  700 \\
\bottomrule
\end{tabular*}
\end{table}

\mypara{Case Study: Engine Timing Model} The Simulink demo palette
presented by the Mathworks \cite{mathworks_simulink_demo} contains a
system representing a four-cylinder spark ignition internal combustion
engine based on a model by Crossley and Cook
\cite{crossley1991nonlinear}. This system is then enhanced by adding a
proportional plus integral (P+I) control law. The integrator is used
to adjust the steady-state throttle as the desired engine speed
set-point changes, and the proportional term compensates for phase lag
introduced by the integrator. In an actual implementation of such a
system, such a P+I controller is implemented using a discrete-time
integrator. Such integrator blocks are typically associated with a
particular numerical integration technique, {\em e.g.}, forward-Euler,
backward-Euler, trapezoidal, {\em etc}. It is expected for different
numerical techniques to produce slight variation in the results, and
we wish to quantify the effect of using different numerical
integrators in a closed-loop setting.  We try to check if the
user-provided bound of $\delta = 1.0$ is satisfied by systems
$\system_1$ and $\system_2$, where $\system_1$ is the original system
provided at \cite{mathworks_simulink_demo}, while $\system_2$ is a
modified system that uses the backward Euler method to compute the
discrete-time integral in the controller. We try to determine the
input signal that leads to a violation of this $\delta$ bound, using a
simulation-guided approach as described before. We scale the outputs
in such a way that a value discrepancy of $1\%$ of the the output
range (~$1000$) is equivalent to a time discrepancy of $0.1$ seconds.
These values are chosen to bias the search towards finding signals
that have a small time jitter. This is an interesting scenario for
this case study where the two systems are exactly equivalent except
for the underlying numerical integration solver.  We find the signal
shown in Fig.~\ref{fig:violation}, for which we find output traces
with Skorokhod distance $1.04$. The experiment uses $296$ simulations
and the total time taken to find the counterexample is $677$ seconds.

\begin{figure}[t]
        \centering
        \scalebox{0.9}{
        \includegraphics[trim=15mm 87mm 0mm 82mm,clip,width=100mm]{./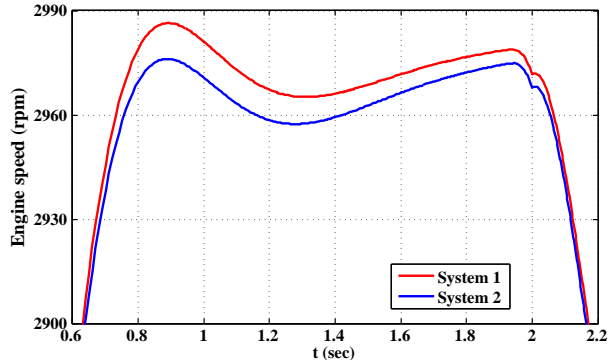}}
\vspace{-2mm}
\caption{Example of non-conformant behavior found using a simulation-guided
optimization algorithm with the Skorokhod distance between system output
trajectories as the cost function.}
\label{fig:violation}
\vspace{2mm}
\end{figure}

\vspace{-3mm}
\section{Conclusion}
\vspace{-2mm}

Metrics for comparing behaviors of dynamical systems 
which quantify both time and value distortions
have heretofore been
an object of mathematical inquiry, without enough attention being paid to
computational aspects and connections to logical requirements.  We argue
that the Skorokhod metric provides a robust definition of conformance by
proving  transference of a rich class of temporal logic properties.
We also demonstrate the computationally tractability of the metric  for practical use 
by  constructing  a conformance testing tool
%quantify its effectiveness as a cost function 
in a  simulation and optimization guided
approach for  finding and quantifying non-conformant behavior of
dynamical systems.
Pinpointing the source of trace deviations is 
necessary in many engineering applications;
our tool allows for independent weighing of time and value-dimension distortions
in order to achieve this objective.
% The core algorithm that we
% utilize is a streaming algorithm that given a $\delta$, monitors if the
% Skorokhod distance between two traces is less than $\delta$. An important
% part of the future work will be to consider a streaming algorithm that
% allows direct computation of the Skorokhod distance for polygonal traces
% instead of the binary search procedure that we currently employ. 

\newpage
\renewcommand{\baselinestretch}{1.0}
 \bibliographystyle{plain}
 \bibliography{skoro}

\newpage
\section*{Appendix}

\subsection*{A. Transference Formalism and Proofs}
\begin{example}[Freeze Quantification]
Suppose we want to express that whenever the event $Q$ occurs, it is followed
later by $R$, and then by $S$, such that the time difference between occurrences 
of $Q$ and $R$ is at most $5$, and also the time difference between occurrences 
of $Q$ and $S$ is at most $10$.
This can be expressed in  \mTLTL{\small$(\TFun)$} as
\[\Box\Big( x.Q \rightarrow \Diamond \big(y.\big[ R\wedge(y\leq x+5) \wedge
\Diamond \left(z. \left(S \wedge z\leq x+10\right)\right)\big]\big) \Big).\]
Thus, freeze quantification,  by giving a mechanism to bind times to variables,
allows us to relate, 
with several constraints, far apart events.
\qed
\end{example}

\begin{example}[Freeze Quantification Functions]
Suppose we want to express that  whenever the event $Q$ occurs, 
it must be followed by a response $R$ within time $\lambda^{t_Q}$ for some 
$\lambda > 1$ where $t_Q$ is the time at which $Q$ occurred; 
thus, the later
$Q$ occurs the more delay we can tolerate in the response time.
The requirement can be expressed as
$x. \left( Q \rightarrow \Diamond 
\big(y. \left(R \wedge 0 \leq y\leq \lambda^x\right)\big)\right)$.
\qed
\end{example}

\begin{example}[$\delta$-relaxation for Bounded Temporal Operators -- \MTL]
\label{example:RelaxationMTL}
We demonstrate how $\delta$-relaxation operates on bounded time constraints
through an example.
Consider an \MTL formula $\phi= Q \until_{[a,b]} R$.
The  $\delta$-relaxation of this formula over the closed interval ${ I_{\TFun}} = \reals_+$
is   $Q \until_{[a-2\cdot\delta\,,\, b+2\cdot \delta]} R$.
This can be seen as follows.
The formula $\phi$ can be written in \TLTL syntax as:
\[
x. Q \until y. \left( (y\leq x+b) \wedge (y \geq x+b) \wedge R\right).\]
The $\delta$-relaxation of this formula according to
Definition~\ref{definition:Relaxation} is:
\begin{align*}
  \myrelax_{\reals_+}^{\delta}
\left(x. Q \until y. \left( (y\leq x+b) \wedge (y \geq x+a) \wedge R\right) \right) \ & =\\
&\hspace{-30mm} = \myrelax_{\reals_+}^{\delta}
\left(x. Q \until y. \left( (y-x-b\leq 0) \wedge (y -x-a\geq 0) \wedge R\right) \right) \\
& \hspace{-30mm} =
x. Q \until y. \left( 
\begin{array}{l}
(y-x-b - 2\vdot \delta\leq0 ) \ \wedge\\
 (y -x -a +2\vdot \delta\geq 0) \wedge R
\end{array}\right) 
\\
&\hspace{-20mm}
\begin{array}{l}
\text{ since the Lipschitz constant of } y-x-c  \text{ is } 2 \\
\text{ for any constant } c
\end{array}\\
&\hspace{-30mm}  = x. Q \until y. \left( (y \leq x+b+2\vdot\delta ) \wedge (y \geq x +a -2\vdot \delta) \wedge R\right) \\
& \hspace{-30mm} = Q \until_{[a-2\cdot\delta,b+2\cdot \delta]} R.
\end{align*}
Thus, the time constraint interval boundaries are relaxed by $2\vdot \delta$.
The factor of $2$ arises because there are two
contributing factors: the starting time of $Q$ can be ``pulled back'' by 
$\delta$, and the time of $R$ can be postponed by $\delta$; thus, the time
duration
in between $Q$ and $R$ increases by $2\vdot \delta$.
\qed
\end{example}

\smallskip\noindent\textbf{Removing Negation using the $\awaits$ Operator.}
The following  identities hold relating the $\awaits$ operator  to the
$\until$ operator 
\begin{compactenum}
\item 
$ \phi_1 \until \phi_2\, \equiv\,  \neg\left(\neg(\phi_2) \awaits (\neg\phi_1 \wedge \neg\phi_2)\, \right)$; and
\item 
$ \phi_1 \awaits \phi_2 \,  \equiv \,  \neg\left(\neg(\phi_2) \until (\neg\phi_1 \wedge \neg\phi_2)\, \right)$.
\end{compactenum}
Informally, the first identity states that $\neg (\phi_1 \until \phi_2)$ holds iff either
(i)~$\phi_2$ never holds; or
(ii)~there is a point where $\phi_1$ is false, and at that point and all points before it, $\phi_2$ has remained false.
The second identity is similar.
The first identity above allows us to ``push'' the negations down using the $\awaits$ operator.
%We also
%use the $\vee$ disjunction operator to in the negation elimination procedure.
The mechanism for the three interesting cases is below.
\begin{align*}
\neg\left(f_{\mytime}(x_1, \dots, x_l) \,  \sim\, 0 \right)   & 
\ \equiv\  f_{\mytime}(x_1, \dots, x_l) \, \myneg(\sim)\,  0,\\
& \hspace{2.5cm}
\text{where, for  }  \sim \,\in \set{\leq, <, \geq, >} \text{ we have}\\
&\hspace{2.5cm} \begin{aligned}
\myneg(\leq) & \text{ to be } >; \hspace{5mm} &
\myneg(<) &  \text{ to be }   \geq;\\
\myneg(\geq) &  \text{ to be }   <; \hspace{5mm} & 
\myneg(>) &  \text{ to be }   \leq
\end{aligned}\\
\neg(x.\psi) \, & \equiv\,  x.\neg(\psi)\\    
\neg\left(\phi_1 \until \phi_2 \right) \, &  \equiv\,  \neg(\phi_2) \awaits (\neg\phi_1 \wedge \neg\phi_2)
\end{align*}

\begin{proposition}
\label{proposition:PropositionalRelaxation}
The function $\myrelax$ is a relaxation on   \mTLTL{\small$(\TFun)$} formulae,
\emph{i.e.} if a timed propositional trace $\pi\models \phi$ for
a  \mTLTL{\small$(\TFun)$} formula $\phi$,
then $\pi\models \myrelax_{ I_{\TFun}}^{\delta}(\phi)$.
\end{proposition}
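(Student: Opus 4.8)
The plan is to prove the following environment-parametrized strengthening by structural induction on the negation-normal-form formula $\phi$: for every timed propositional trace $\pi$ and every time environment $\env$, if $\pi \models_{\env} \phi$ then $\pi \models_{\env} \myrelax_{I_{\TFun}}^{\delta}(\phi)$. The stated proposition follows immediately, since $\pi \models \phi$ means $\pi \models_{\env} \phi$ for some witnessing $\env$, and the same $\env$ then witnesses $\pi \models \myrelax_{I_{\TFun}}^{\delta}(\phi)$. The strengthening to arbitrary $\env$ (and, implicitly, to arbitrary suffixes $\pi^t$) is essential, because the freeze-quantifier clause modifies $\env$ and the temporal clauses quantify over suffixes, so the inductive hypothesis must be available in exactly those forms.

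The base cases $p$, $\neg p$, $\true$, and $\false$ are immediate, since $\myrelax_{I_{\TFun}}^{\delta}$ acts as the identity on them. The only case carrying any real content is the predicate $f_{\mytime}(x_1,\dots,x_l) \sim 0$. Here I would first record that $K_{f_{\mytime}}^{I_{\TFun}}(\delta) \geq 0$: the defining supremum ranges over tuples with $|t_i - t_i'| \leq \delta$, and the admissible choice $t_i = t_i'$ yields the value $|f_{\mytime}(t_1,\dots,t_l) - f_{\mytime}(t_1,\dots,t_l)| = 0$, so the supremum of these nonnegative quantities is nonnegative. Consequently, if $\sim \,\in\, \set{>,\geq}$ and $f_{\mytime}(\env(x_1),\dots,\env(x_l)) \sim 0$, then adding the nonnegative quantity $K_{f_{\mytime}}^{I_{\TFun}}(\delta)$ only enlarges the left-hand side, so $f_{\mytime}(\env(x_1),\dots,\env(x_l)) + K_{f_{\mytime}}^{I_{\TFun}}(\delta) \sim 0$; symmetrically, if $\sim \,\in\, \set{<,\leq}$, subtracting the nonnegative $K$ preserves the inequality. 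In either case $\pi \models_{\env} \myrelax_{I_{\TFun}}^{\delta}(f_{\mytime}(x_1,\dots,x_l) \sim 0)$, as desired.

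For the inductive steps the relaxation is a homomorphism over every remaining connective, so each case reduces to the inductive hypothesis together with monotonicity of the matching semantic clause. The cases $\phi_1 \wedge \phi_2$ and $\phi_1 \vee \phi_2$ are immediate. For $x.\psi$, from $\pi \models_{\env} x.\psi$ we get $\pi \models_{\env[x := t_0]} \psi$, whence the inductive hypothesis yields $\pi \models_{\env[x := t_0]} \myrelax_{I_{\TFun}}^{\delta}(\psi)$, i.e. $\pi \models_{\env} x.\myrelax_{I_{\TFun}}^{\delta}(\psi) = \myrelax_{I_{\TFun}}^{\delta}(x.\psi)$. For $\phi_1 \until \phi_2$, the witness time $t$ and the requirement that $\phi_1 \vee \phi_2$ hold on every earlier suffix transfer verbatim after applying the inductive hypothesis to the suffixes $\pi^t$ and $\pi^{t'}$, using $\myrelax_{I_{\TFun}}^{\delta}(\phi_1 \vee \phi_2) = \myrelax_{I_{\TFun}}^{\delta}(\phi_1) \vee \myrelax_{I_{\TFun}}^{\delta}(\phi_2)$; the $\awaits$ case is handled identically, treating the two disjuncts of its semantics separately.

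I expect no genuine obstacle here: the entire substance of the argument is the sign observation $K_{f_{\mytime}}^{I_{\TFun}}(\delta) \geq 0$ in the predicate case, and the remainder is a routine structural induction. The only point demanding care is to set up the induction over all environments and all suffixes from the start, so that the freeze-quantifier and $\until$/$\awaits$ cases can legitimately invoke the hypothesis on $\env[x := t_0]$ and on the suffix traces $\pi^t$, $\pi^{t'}$.
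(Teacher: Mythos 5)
Your proof is correct and takes essentially the same route as the paper's, which likewise observes that $\myrelax$ is a relaxation on the atomic predicates $f_{\mytime}(x_1, \dots, x_l) \sim 0$ and then appeals to a straightforward structural induction. You have simply supplied the details the paper leaves implicit, namely the sign observation $K_{f_{\mytime}}^{I_{\TFun}}(\delta) \geq 0$ and the need to carry arbitrary environments and suffixes through the induction for the freeze-quantifier and $\until$/$\awaits$ cases.
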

\begin{proof}
%\begin{proof}[\textbf{Proof of Proposition~\ref{proposition:PropositionalRelaxation}}]
Observe that, over the predicates $f_{\mytime}(x_1, \dots, x_l)   \sim 0$, the
function $\myrelax$ is indeed a relaxation, \emph{i.e}, 
if $f_{\mytime}(t_1, \dots, t_l)  \sim 0$ for
values $t_1, \dots, t_l$, then 
$\myrelax_{ I_{\TFun}}^{\delta}\left(f_{\mytime}(t_1, \dots, t_l)  \right) \sim 0) $ also holds.
The result follows by a straightforward induction argument.\qedhere
\end{proof}

\begin{proof}[\textbf{Proof of 
Theorem~\ref{theorem:PropositonalRobustness}}]
Let $\untime(\phi)$ be the formula where all freeze variable  constraints are replaced
by $\true$ (\emph{e.g.}  $\untime(x. (Q \wedge x<5)) $ is $x. (Q\wedge \true)$).
Since $\dist(\pi, \pi') < \delta$, we have that there exists a retiming
$r: \tdom(\pi) \mapsto \tdom(\pi')$ such that 
\begin{equation}
\label{equation:SameSequence}
\pi(t) = \pi'(\retime(t)).
\end{equation}
% the untimed propositional sequences of 
%$\pi$ and $\pi'$ are the same.
%Since the  untimed propositional sequences of 
%$\pi$ and $\pi'$ are the same, we have that 
This implies that
both $\pi$ and $\pi'$  satisfy $\untime(\phi)$, which can be
shown by an induction argument.
The interesting cases are for the $\until$ and $\awaits$ operators.
We sketch the argument for the $\until$ case (the argument for
$\awaits$ is similar).
The time environment $\env'$ for $\pi'$ assigns the time $\retime(t_x) $
to the freeze variable $x$ where the witnessing freeze variable environment $\env$ for
$\pi\models \phi$ assigns $t_x$ to $x$.
Let  $\pi \models_{\env} \phi_1 \until \phi_2$, and let
$t$ be the time value which demonstrates this satisfaction (as in 
Definition~\ref{definition:PropositionalSemantics}), with the corresponding
freeze variable environment $\env$.
To show $\pi' \models_{\env'} \phi_1 \until \phi_2$, we pick the time
$\retime(t)$, with the environment $\env'$  for $\pi'$ which 
 assigns the time $\retime(t_x) $
to the freeze variable $x$ where $\env(x) = t_x$.
It can be checked that, due to Equation~\ref{equation:SameSequence},
we have
(i)~$\retime(t) \geq  \env'(x)$, for a freeze variable $x$ in 
 $\phi_1 \until \phi_2$
 (which was previously bound);
(i)~${\pi'}^{\retime(t)} \models_{\env'} \phi_2$; and
(ii)~for all $t_0' \leq t^{\dagger} < \retime(t)$, we have 
${\pi'}^{t^{\dagger}} \models_{\env'} \phi_1\vee \phi_2$.
Thus, $\retime(t)$, and $\env'$ demonstrate that
$\pi'\models_{\env'} \phi_1 \until \phi_2$.

We now check what is the relaxation needed on the original freeze variable constraints
so that $\pi'$  satisfies the relaxed constraints.
Without loss of generality, assume that each freeze variable $x$ is only quantified once
in $\phi$, \emph{i.e.} once it is bound to a value by ``$x.$'', it is not ``re-bound'' with
another application of ``$x.$''.

Let $\kappa_{\pi}$ denote an assignment of time values (from $I$)
 to the freeze variables
such that all the freeze variable constraints in $\phi$ are satisfied, \emph{i.e.} $\kappa_{\pi}$
is an time environment witness to the satisfaction of $\phi$ by $\pi$.
%Thus, $\kappa_{\pi} (x)$ is a  timestamp occurring in $\pi$ which is assigned to
%$x$.
%Since $\kappa_{\pi} (x)$ is a  timestamp occurring in $\pi$ (say at position $j$), and
%$\dist(\pi, \pi') \leq \delta$, the corresponding $j$-th timestamp in $\pi'$ is 
%$t'[j]$ such that $\abs{\kappa_{\pi} (x) - t'[j]}\leq \delta$.
Consider a free variable assignment $\kappa_{\pi'}$ corresponding to
$\kappa_{\pi}$, where  $\kappa_{\pi'}(x) = \retime\left(\kappa_{\pi}(x)\right)$.
% if   $\kappa_{\pi}(x) = t[j]$.
% That is, if  $\kappa_{\pi}$ assigns the the $j$-th timestamp of $\pi$ to the freeze variable
% $x$, we assign the corresponding timestamp $ t'[j]$ of $\pi'$ to $x$.
This is a legal variable assignment compatible with some $\until, \awaits$ time
 witnesses which demonstrate that
$\pi'$ satisfies
 $\untime(\phi)$, as shown previously.
% since the positions of time-values in both $\kappa_{\pi}$ and
% $\kappa_{\pi'}$ are the same; and $\kappa_{\pi}$ was a legal variable assignment
% for the satisfaction of $\phi$.
Observe that by the existence of a retiming function, for all freeze variables  $x$ occurring in $\phi$, we have
that $\abs{\kappa_{\pi'}(x) -\kappa_{\pi}(x)} < \delta$.

Since the time \emph{values} of variables are different in $\kappa_{\pi}$
and $\kappa_{\pi'}$, the original freeze constraints (\emph{e.g.} $x<5$) in $\phi$
might not be satisfied with the assignment $\kappa_{\pi'}$.
Consider a freeze variable constraint $f_{\mytime}(x_1, \dots, x_l) \sim 0$ in $\phi$.
We know that $f_{\mytime}(\kappa_{\pi}(x_1), \dots, \kappa_{\pi}(x_l)) \sim 0$ is true.
As $\abs{\kappa_{\pi'}(x) -\kappa_{\pi}(x)} \leq \delta$ for all 
 freeze variables  $x$ occurring in $\phi$, 
by the definition of relaxation, we have that 
% if the Lipschitz 
%constant of $f_{\mytime}$ is $K_{\mytime}$ (for the $L_{\infty}$ norm), we have that:
\begin{compactenum}
\item 
$f_{\mytime}(\kappa_{\pi}(x_1), \dots, \kappa_{\pi}(x_l)) +  K_{\mytime} ( \delta) \sim 0$
if $\sim\, \in \set{>, \geq}$; and 
\item 
$f_{\mytime}(\kappa_{\pi}(x_1), \dots, \kappa_{\pi}(x_l))  - K_{\mytime} (\delta) \sim 0$
if $\sim\, \in \set{<, \leq}$.
\end{compactenum}
This implies that $\kappa_{\pi'}$ is also a witness to the satisfaction of 
$\myrelax_{I_{\pi,\pi'}}^{\delta}(\phi)$ by $\pi'$.
Thus, $\pi'\models \myrelax_{I_{\pi,\pi'}}^{\delta}(\phi)$.
%\mynote{Need to clear up a bit.}
\qedhere
\end{proof}

\begin{proof}[\textbf{Example~\ref{example:Transference-one} details}]
Since $\pi$  satisfies $\phi$, we must have time-stamps $t_x, t_y, t_z$ bound to
$x,y,z$ respectively so that with these assignments, the formula $\phi$
is satisfied.
Since  $\pi'$ is  $\delta$ close to $\pi$, for  every $\epsilon > 0$, there is a
retiming from $\pi$ to $\pi'$ such that the times $t_x, t_y, t_z$ in $\pi$
are mapped to  $t_x', t_y', t_z'$ in $\pi'$ such that
(a)~$|t_x-t_x'| \leq \delta+\epsilon$; and
(b)~$|t_y-t_y'| \leq \delta+\epsilon$; and
(c)~$|t_z-t_z'| \leq \delta+\epsilon$.
Let $\delta'= \delta+\epsilon$.

The sum $(t_x'-t_y')^2 + (t_y'-t_z') ^2+(t_z'-t_x')^2$ is
\begin{align*}
& =\,  \left( (t_x'-t_x) + (t_x-t_y) + (t_y-t_y')\right)^2 \, +\, 
 \left( (t_y'-t_y) + (t_y-t_z) + (t_z-t_z')\right)^2 \, +\\
& \hspace*{63mm} \left( (t_z'-t_z) + (t_z-t_x) + (t_x-t_x')\right)^2\\
& =\, 
2 \left((t_x'-t_x)^2  + (t_y'-t_y)^2  + (t_z'-t_z)^2\right) \,  +\,
 (t_x-t_y)^2 + (t_y-t_z) ^2+(t_z-t_x)^2\, +\,\\
& \hspace*{25mm}
2\left( (t_x'-t_x)(t_x-t_y) +  (t_y-t_y')(t_x-t_y)+ (t_x'-t_x)(t_y-t_y')\right)
\, +\\
&\hspace*{25mm}
2\left( (t_y'-t_y)(t_y-t_z) +  (t_z-t_z')(t_y-t_z)+ (t_y'-t_y)(t_z-t_z')\right)
\, +\\
& \hspace*{25mm}
2\left( (t_z'-t_z)(t_z-t_x) +  (t_x-t_x')(t_z-t_x)+ (t_z'-t_z)(t_x-t_x')\right)\\
& \leq \,
6\delta'^2 + d+ 
2\delta'\abs{t_x-t_y} + 2\delta'^2
+2\delta'\abs{t_y-t_z} + 2\delta'^2
+ 2\delta'\abs{t_z-t_x} + 2\delta'^2\\
& = \,
d + 12 \delta'^2 + 4\delta'\left(\abs{t_x-t_y} + \abs{t_y-t_z} + \abs{t_z-t_x} \right)\\
& \leq \,
d + 12 \cdot\delta'^2 + 4\sqrt{3}\cdot\delta'\cdot \sqrt{d}
\end{align*}
In the last step above, we use the inequality:
$\abs{a} + \abs{b} + \abs{c} \leq \sqrt{3}\cdot\sqrt{ a^2 + b^2+c^2}$
This inequality is obtained by applying the Cauchy-Schwarz inequality to
the tuples $(\abs{a}, \abs{b}, \abs{c})$ and $(1,1,1)$.
Thus, by Theorem~\ref{theorem:PropositonalRobustness},
for every $\epsilon >0$, we have
\[\pi'\models
x. \left( Q \rightarrow \Diamond
\big(y. \left(R \wedge
 \Diamond\left[z. \left(S \wedge
\left(  (y-x)^2 + (z-y)^2 + (z-x)^2 \leq d^{\dagger} \right) \right)
\right] \right)
\big)\right) \]
where 
$d^{\dagger} = d + 12 \cdot\delta'^2 + 4\sqrt{3}\cdot\delta'\cdot \sqrt{d}$.
\qedhere
\end{proof}

\begin{definition}[$\delta$-relaxation of \mTLTL{\small$(\TFun, \SFun)$} formulae]
\label{definition:Signal Relaxation}
Let $\phi$ be a  \mTLTL{\small$(\TFun, \SFun)$} formula in which negations appear only on the 
prepositional symbols .
The $\delta$ relaxation of $\phi$ (for $\delta \geq 0$), denoted 
$\myrelax_{I_{\TFun}, \imap}^{\delta}(\phi)$  is defined as follows, 
where $I_{\TFun}$, a closed subset of
$reals_+$, is the domain of the variables in $V_{\mytime}$; 
and $ \imap$ is a mapping from $V_{\mysig}$  to closed intervals
of $\reals$ such that $\imap(z)$ denotes the domain of $z$.
\begin{align*}
%\myrelax_{\delta}(p) & = p;  \hspace{35mm}  \myrelax_{\delta}(\neg p) = \neg p;\\
\myrelax_{I_{\TFun}, \imap}^{\delta}(\true) & = \true; 
\hspace{30mm} \myrelax_{I_{\TFun}, \imap}^{\delta}(\false)  = \false;\\
\myrelax_{I_{\TFun}, \imap}^{\delta}( \phi_1 \wedge \phi_2 ) & = 
\myrelax_{I_{\TFun}, \imap}^{\delta}( \phi_1)  \wedge \myrelax_{\delta}(\phi_2 );\\
%\hspace{5mm} 
 \myrelax_{I_{\TFun}, \imap}^{\delta}( \phi_1 \vee \phi_2 )  & = 
 \myrelax_{I_{\TFun}, \imap}^{\delta}( \phi_1)  \vee \myrelax_{I_{\TFun}, \imap}^{\delta}(\phi_2 );\\
\myrelax_{I_{\TFun}, \imap}^{\delta}( x.\psi ) & = x.\myrelax_{I_{\TFun}, \imap}^{\delta}(\psi);\\
\myrelax_{I_{\TFun}, \imap}^{\delta}(  \phi_1 \until \phi_2 )&  =
\myrelax_{I_{\TFun}, \imap}^{\delta}(\phi_1) \until \myrelax_{I_{\TFun}, \imap}^{\delta}(\phi_2);\\
\myrelax_{I_{\TFun}, \imap}^{\delta}(  \phi_1 \awaits \phi_2 )&  =
\myrelax_{I_{\TFun}, \imap}^{\delta}(\phi_1) \awaits \myrelax_{I_{\TFun}, \imap}^{\delta}(\phi_2)  
\\
\myrelax_{I_{\TFun}, \imap}^{\delta}\left(f_{U}(z_1, \dots, z_l)  \right) \sim 0)  & = 
\begin{cases}
f_{U}(z_1, \dots, z_l) \,  +\,  K_{f_{U}}(\delta) \  \sim\,  0 & \text{ if } 
 \sim\, \in\set{>, \geq};\\
f_{U}(z_1, \dots, z_l) \, -\,   K_{f_{U}}(\delta) \  \sim \, 0 & \text{ if } 
 \sim \,\in\set{<, \leq};\\
\end{cases}\\
& \qquad \text{ where } U \in \set{ \mytime, \mysig} \text{ with }  K_{f_{U}} 
\text{ being as in Definition~\ref{definition:Relaxation};}\\
&\qquad\text{ and  }
K_{f_{\mysig}}: \big[0,\  \max_{z\in V_{\mysig}} |\max \imap(z) \, - \, \min\imap(z)| \big]
\mapsto \reals_+\\
&\qquad  \text{ is a function such that:}\\
&\hspace{-40mm}
 K_{f_{\mysig}}(\delta) = 
\sup_{
  \begin{array}{c}
    z_i\in \imap(z_i); \,  z'_i\in \imap(z'_i)\\
    \text{ for all } i 
 \end{array}
}
\left\{
\left\arrowvert
  \begin{array}{c}
    f_{\mysig}(z_1,\dots, z_l) \\
    -\\
    f_{\mysig}(z_1',\dots,z_l') 
  \end{array}\right\arrowvert
\text{ s.t. }
  |z_i-z_i'| \leq \delta \text{ for all } i
\right\}\qedhere
\\
\end{align*}
\qed
\end{definition}
The functions  $K_{f_{\mysig}}(\delta)$ define the maximal change in the value
of $f_{\mysig}$ that can occur when the input variables can vary by $\delta$.
The role of $\imap$ in the above definition is to restrict the domain of the signal variables
in order to obtain the least possible bounds relaxation bounds on the signal
constraints; as was done in Definition~\ref{definition:Relaxation} for the
freeze variables.

\begin{proposition}
\label{proposition:SignalTLTLRelaxation}
The function $\myrelax_{I_{\TFun}, \imap}^{\delta}$ is a relaxation on   \mTLTL{\small$(\TFun, \SFun)$} formulae,
\emph{i.e.} if a timed $\reals^n$-valued trace  $\pi\models \phi$ for
a  \mTLTL{\small$(\TFun, \SFun)$} formula $\phi$,
then $\pi\models \myrelax_{I_{\TFun}, \imap}^{\delta}(\phi)$.
\end{proposition}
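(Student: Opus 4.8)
The plan is to follow the template of the proof of Proposition~\ref{proposition:PropositionalRelaxation}: reduce the statement to the atomic level, where I show that $\myrelax_{I_{\TFun}, \imap}^{\delta}$ only ever \emph{weakens} a predicate, and then lift this to arbitrary formulae by structural induction. Since $\myrelax$ is only defined on formulae in negation-normal form (negations appear only on atomic predicates, which the transformation leaves in place, moving only their thresholds), every compound operator is monotone, so it suffices to prove the implication ``$\pi\models_{\env}\psi \Rightarrow \pi\models_{\env}\myrelax_{I_{\TFun},\imap}^{\delta}(\psi)$'' and then propagate it through the recursion. Note that, unlike Theorem~\ref{theorem:SignalTLTLRobustness}, here the trace $\pi$ and the environment $\env$ are held fixed; no retiming is involved, so the quantitative content is absent and only the syntactic weakening remains.

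For the base cases there are now two kinds of atomic predicate. The temporal predicates $f_{\mytime}(x_1,\dots,x_l)\sim 0$ are handled exactly as in Proposition~\ref{proposition:PropositionalRelaxation}, since the temporal clause of $\myrelax_{I_{\TFun},\imap}^{\delta}$ coincides with that of $\myrelax_{J}^{\delta}$ from Definition~\ref{definition:Relaxation}. The only genuinely new case is the signal predicate $f_{\mysig}(z_1,\dots,z_l)\sim 0$. The key observation is that $K_{f_{\mysig}}(\delta)$ takes values in $\reals_+$, i.e.\ $K_{f_{\mysig}}(\delta)\geq 0$ by the definition in Definition~\ref{definition:Signal Relaxation}. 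Hence, evaluating at the signal values $v_i=\pi_{j_i}[t_0]$: if $\sim\,\in\set{>,\geq}$ and $f_{\mysig}(v_1,\dots,v_l)\sim 0$ holds, then $f_{\mysig}(v_1,\dots,v_l)+K_{f_{\mysig}}(\delta)\geq f_{\mysig}(v_1,\dots,v_l)$, so the relaxed predicate also holds; symmetrically, if $\sim\,\in\set{<,\leq}$, subtracting the nonnegative quantity $K_{f_{\mysig}}(\delta)$ again moves the comparison in the direction that preserves its truth. Because we evaluate on the same trace $\pi$, the values $v_i$ are unchanged, so the implication is immediate.

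For the inductive step I would invoke monotonicity of the surviving connectives. Assuming the induction hypothesis for the immediate subformulae, each clause of Definition~\ref{definition:Signal Relaxation} carries satisfaction to its relaxation: the $\wedge$ and $\vee$ clauses follow because $\myrelax$ distributes over them and conjunction/disjunction are monotone; the freeze clause $\myrelax(x.\psi)=x.\myrelax(\psi)$ follows because $\myrelax$ commutes with the quantifier and leaves the binding $x:=t_0$ untouched, so the hypothesis applied under $\env[x:=t_0]$ suffices. The only mildly delicate clauses are $\until$ and $\awaits$: since $\pi$ and $\env$ are fixed and no retiming occurs, one reuses the \emph{very same} witness time $t$ that certifies $\pi\models_{\env}\phi_1\until\phi_2$, and applies the induction hypothesis to $\phi_2$ at the suffix $\pi^t$ and to $\phi_1\vee\phi_2$ at every suffix $\pi^{t'}$ with $t_0\leq t'<t$, yielding $\pi\models_{\env}\myrelax(\phi_1)\until\myrelax(\phi_2)$; the $\awaits$ case is identical.

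I do not anticipate a substantive obstacle, as the proposition asserts only that the syntactic transformation weakens formulae. The single point requiring care is the sign bookkeeping at the atomic level, namely verifying that $K_{f_{\mysig}}(\delta)\geq 0$ and that $K$ is \emph{added} for $\sim\,\in\set{>,\geq}$ and \emph{subtracted} for $\sim\,\in\set{<,\leq}$, so that in every case the threshold moves in the direction that enlarges the satisfying set. Once this is confirmed, the structural induction closes routinely.
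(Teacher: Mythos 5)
Your proposal is correct and matches the paper's own argument: the paper proves Proposition~\ref{proposition:SignalTLTLRelaxation} by noting it is ``similar to'' Proposition~\ref{proposition:PropositionalRelaxation}, whose proof is precisely your two ingredients --- the observation that relaxation only weakens atomic predicates (since $K_{f_{\mytime}}$, $K_{f_{\mysig}}$ are nonnegative, with the sign of the shift matched to the direction of $\sim$) followed by a straightforward structural induction over the negation-normal-form connectives. You simply spell out the inductive cases (freeze quantifier, $\until$, $\awaits$ with the same witness time) that the paper leaves implicit.
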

\begin{proof}
%\begin{proof}[\textbf{Proof of 
%Proposition~\ref{proposition:SignalTLTLRelaxation}}]
The proof is similar to the proof of Proposition~\ref{proposition:PropositionalRelaxation}.
\qedhere
\end{proof}

\begin{proof}[\textbf{Proof of 
Theorem~\ref{theorem:SignalTLTLRobustness}}]
The proof use the result for the propositional case, Theorem~\ref{theorem:PropositonalRobustness}.
We construct the propositions $p_{f_{\mysig}}$ defined to be
$\myrelax_{I_{\TFun}, \imap}^{\delta}\left(f_{\mysig}(\overline{y}) \right) \sim 0) $ 
for the constraints  over $V_{\mysig}$ 
in the formula $\phi$; and define the  \mTLTL{\small$(\TFun)$} formula
$\phi_{\prop}$ as that obtained from $\phi$ by syntactically replacing
each constraint $ f_{\mysig}(\overline{y}) \sim 0$ in $\phi$ by $p_{f_{\mysig}}$.
Let $\prop_{\mysig}$ denote all such predicates for $\phi$.
We obtain the timed  $\prop_{\mysig}$ propositional traces 
$\pi_{P_{\mysig}}, \pi_{P_{\mysig}}'$
from $\pi, \pi'$ by mapping to propositions.
By the definition of the skorokhod distance, the distance between
 $\pi_{\prop_{\mysig}}$ and $\pi_{\prop_{\mysig}}'$ is less than $\delta$.
By Theorem~\ref{theorem:PropositonalRobustness}, 
$ \pi_{\prop_{\mysig}}' \models \phi_{\prop}$.
This implies  $\pi' \models \myrelax_{I_{\TFun}, \imap}^{\delta}(\phi)$.
\qedhere
\end{proof}

\subsection*{B. Details on Case Studies}

\paragraph{LQR-based pitch controller.} The aircraft pitch controller
system has $3$ state variables, and the state vector $\mathbf{x}$ =
$[\alpha\ q\ \theta]$, where $\alpha$ is the angle of attack, $q$ is the
pitch rate, and $\theta$ is the pitch angle. The system has a single input
$\delta$ (the elevator deflection angle).  In deriving the control law, the
designers use the state feedback law to substitute $\delta$ =
$\theta_{des}-K\mathbf{x}$, where $\theta_{des}$ is the desired pitch
angle. The resulting dynamical equations of the system are of the form
$\dot{\mathbf{x}} = (A-BK)\mathbf{x} + B\theta_{des}$, and the output of
the system is the state variable $\theta$.  Note that the $K$ matrix is the
gain matrix resulting from the LQR control design technique.  The values of
the $A$, $B$ and $K$ matrices are as given below:
\[
\begin{array}{ll}
\vspace{1em}
A = \left[\begin{array}{l@{\hspace{1em}}l@{\hspace{1em}}l}
          -0.313  & 56.7   & 0 \\
          -0.0139 & -0.426 & 0 \\
          0       & 56.7   & 0 \\
          \end{array}
    \right] &
B = \left[\begin{array}{l} 
          0.232 \\
          0.0203 \\
          0 \\
          \end{array}
     \right] \\
K = [-0.6435\ \ 169.6950\ \ 7.0711] &  \\
\end{array}
\]

\paragraph{Air-Fuel Ratio Controller.} The Air-Fuel (A/F) ratio control
systems that we consider are simplified versions of industrial-scale
models. Both versions have $2$ exogenous inputs, and $4$ continuous states.
The inputs are engine speed (measured in rpm) and the throttle angle (in
degrees).  The throttle angle is a user input, and it is common to assume a
series of pulses or steps as throttle angle inputs. The engine speed is
considered an input to avoid modeling parts of the powertrain dynamics. In
our experiments, we typically hold the engine speed constant. This is to
mimic a common engine testing scenario involving a dynamometer, which is a
device to provide external torque to the engine to maintain it at a
constant speed. Of the $4$ continuous states, we assume that $2$ of these
states are from the plant model (that encapsulates physical processes
within the engine), while $2$ states belong to the controller. The plant
states $p$ and $\lambda$ denote intake manifold pressure and the A/F ratio
respectively. The controller states $p_e$ denotes the estimated manifold
pressure (with the use of an observer) used in the feed-forward control,
and the state $i$ denotes the integrator state in the P+I feedback control.
We check conformance with respect to the system output $\lambda$.  For the
dynamical system equations, please refer to
\cite{hscc14benchmark,hscc14lyapunov}.

\end{document}